\newtheorem{theorem}{Theorem}
\newtheorem{corollary}{Corollary}
\newtheorem{proposition}{Proposition}
\newcommand{\qa}{{\bf a}}
\newcommand{\qg}{{\bf g}}
\newcommand{\qr}{{\bf r}}
\newcommand{\qw}{{\bf w}}
\newcommand{\qx}{{\bf x}}
\newcommand{\qy}{{\bf y}}
\newcommand{\qz}{{\bf z}}
\newcommand{\qG}{{\bf G}}
\newcommand{\qH}{{\bf H}}
\newcommand{\qI}{{\bf I}}
\newcommand{\qR}{{\bf R}}
\newcommand{\qS}{{\bf S}}
\newcommand{\PP}{\mathbb{P}}
\newcommand{\E}{\mathbb{E}}
\newcommand{\ee}{\bar{\eta}}
\newcommand{\eh}{\hat{\eta}}
\newcommand{\h}{\hat{h}}
\begin{document}
\title{On the Diversity and Coded Modulation Design\\ of Fluid Antenna Systems}
\author{Constantinos Psomas, \IEEEmembership{Senior Member, IEEE}, Ghassan M. Kraidy, \IEEEmembership{Senior Member, IEEE}, Kai-Kit Wong, \IEEEmembership{Fellow, IEEE}, and Ioannis Krikidis, \IEEEmembership{Fellow, IEEE}\vspace{-1mm}
\thanks{The work of C. Psomas, G. M. Kraidy, and I. Krikidis has received funding from the European Research Council (ERC) under the European Union's Horizon 2020 research and innovation programme (Grant agreement No. 819819). It was also co-funded by the European Regional Development Fund and the Republic of Cyprus through the Research and Innovation Foundation, under the project INFRASTRUCTURES/1216/0017 (IRIDA). The work of K. K. Wong was supported by the Engineering and Physical Sciences Research Council (EPSRC) under grant EP/W026813/1.}
\thanks{C. Psomas and I. Krikidis are with the Department of Electrical and Computer Engineering, University of Cyprus, Nicosia, Cyprus (e-mail: \{psomas, krikidis\}@ucy.ac.cy). G. M. Kraidy is with the Department of Electronic Systems, Norwegian University of Science and Technology, Norway (e-mail: ghassan.kraidy@ntnu.no). K.-K. Wong is with the Department of Electronic and Electrical Engineering, University College London, London, United Kingdom (e-mail: kai-kit.wong@ucl.ac.uk).}
\thanks{Part of this work was presented at the IEEE International Conference on Communications, Rome, Italy, May 2023 \cite{icc}.}}

\maketitle

\begin{abstract}
Reconfigurability is a desired characteristic of future communication networks. From a transceiver's standpoint, this can be materialized through the implementation of fluid antennas (FAs). An FA consists of a dielectric holder, in which a radiating liquid moves between pre-defined locations (called ports) that serve as the transceiver's antennas. Due to the nature of liquids, FAs can practically take any size and shape, making them both flexible and reconfigurable. In this paper, we deal with the outage probability of FAs under general fading channels, where a port is scheduled based on selection combining. An analytical framework is provided for the performance with and without errors due to post-scheduling delays. We show that although FAs achieve maximum diversity, this cannot be realized in the presence of delays. Hence, a linear prediction scheme is proposed that overcomes delays and restores the lost diversity by predicting the next scheduled port. Moreover, we design space-time coded modulations that exploit the FA’s sequential operation with space-time rotations and code diversity. The derived expressions for the pairwise error probability and average word error rate give an accurate estimate of the performance. We illustrate that the proposed design attains maximum diversity, while keeping a low-complexity receiver, thereby confirming the feasibility of FAs.
\end{abstract}

\begin{IEEEkeywords}
Fluid antennas, outage probability, outdated channels, diversity, space-time rotations, coded modulation, iterative decoder.
\end{IEEEkeywords}

\section{Introduction}
The numerous advances in antenna technology and in particular antenna arrays have been important in the evolution of communication systems towards 5G and beyond 5G networks \cite{ZHANG}. Indeed, the implementation of multiple-input multiple-output (MIMO) antenna architectures has been an essential element in wireless networks for the realization of high data rates and spectral efficiency due to beamforming and spatial multiplexing. Such antennas are usually made of metal and are designed in such a way so as to meet specific network requirements. Furthermore, their design is subject to physical constraints, the most significant being the spacing between two antennas, which needs to be at least as half as the carrier's wavelength to avoid electromagnetic coupling \cite{HEATH}. Naturally, this metallic structure makes them static (i.e., inflexible), impractical and too costly for very small devices to have many antennas.

Recently, there has been several efforts to introduce reconfigurability in wireless networks. Towards this end, reconfigurable intelligent surfaces were proposed to control the propagation environment via software-controlled metasurfaces \cite{LIASKOS}. From a transceiver's point-of-view, the notion of fluid antennas (FAs), also known as liquid antennas, has been recently proposed in order to add both flexibility and reconfigurability at the radio frequency (RF) front-end \cite{HUANG}. In particular, FAs consist of radiating liquid elements such as Mercury, eutectic gallium-indium (EGaIn) and even sea water, enclosed in a dielectric holder \cite{PARACHA}. The holder contains several pre-defined positions, known as ports, where the employed liquid can be moved towards a selected port in a programmable and controllable manner. Therefore, this technology provides new degrees of freedom in the design of wireless communication systems. In particular, an FA uses just one RF chain and thus the spacing constraint does not apply in this case \cite{KIT}. As such, an FA can be small and energy efficient, making it a suitable technology for wireless devices and sensors \cite{KIT3}. In fact, given their flexibility, they are exceptionally ideal for wearable devices. Moreover, due to the physical characteristics of liquids, FAs can modify their shape but also adapt their position in order to reconfigure the operating frequency \cite{AS, MK}, the radiation pattern \cite{KIT6} as well as the gain and the polarization \cite{KIT7}, which can be beneficial in multi-user networks for interference mitigation.

Despite the fact that FAs have been studied from an RF/microwave engineering perspective and several early prototypes exist in the literature, e.g. see \cite{HUANG, PARACHA, CHEN, XING}, the theoretical foundations of FAs and the investigation of communication techniques that unlock their full potentials are still not understood. Indeed, the exploitation of the liquid dimension associated with the FAs will open new design opportunities and establish a new communication paradigm \cite{KIT, KIT2, KIT4}. In \cite{KIT}, the authors consider an FA within a linear space, where the selected port is based on the selection combining scheme. They study the performance of an FA system in terms of outage probability and show that an FA can outperform a maximum ratio combining (MRC) system with conventional antennas, when the number of ports is sufficiently large. The work in \cite{KIT2} extends the study with respect to the ergodic capacity, where it is demonstrated that FAs can match the capacity of MRC systems. Finally, the performance of FAs with multiuser interference is studied in \cite{KIT4}; it is shown that with a large enough number of ports, the FA attains a relatively low outage probability. Now, the aforementioned studies assume that the performance is not affected by any imperfections as a result of the selection process. Recent advances in microfluidics have certainly made the realization of reconfigurable RF devices with low transition times between ports possible \cite{NC}. Moreover, the concept of a reconfigurable FA can be implemented potentially through the employment of switchable non-fluid pixels \cite{DR}, where the transition can be accomplished almost instantly. However, due to the sequential nature of FAs, estimation and selection will be affected by some delays between the pre-scheduling and the post-scheduling of a port, especially for a large number of ports. Therefore, a proper analysis considering processing imperfections is of great importance.

Motivated by this, in this paper, we study the outage probability of an FA system, where the channel at the selected port is subject to practical delays. By taking into account spatial correlation, we analytically determine the loss in diversity and propose a prediction scheme to restore the performance. To the best of the authors' knowledge, the provided analytical framework and methodology under spatial correlation is novel. Now, selection diversity techniques have been widely used due to the simplicity in their implementation and in the design of transmission schemes, as all the signals are placed on one channel. On the other hand, combining diversity techniques provide better performance as they exploit all the available channels (ports) for transmission, which however implies a more sophisticated transmission scheme design and multiple RF chains to activate multiple ports. Therefore, we also deal with the design of a low-complexity coded modulation scheme for FA systems. In particular, we propose a novel port combining scheme, which exploits the liquid's movement through the holder and achieves maximum code diversity. In uncoded transmission over block-fading channels such as multiple-antenna \cite{belfiore05} or cooperative channels \cite{yang07}, achieving optimal performance requires space-time coding with full spreading (i.e. same space and time dimensions), which increases the cardinality of the transmit symbol vector and thus results in high demodulation complexity. However, when channel coding is implemented with optimal interleaving, space-time rotations with limited spreading (i.e. time dimension smaller than space dimension) can be used to complement the code diversity, thus leading to a low-complexity receiver \cite{gresset08}. Therefore, code diversity can mitigate the drawback of non-ergodic fading channels, in that a proper interleaving of bits over channel states allows to both recover diversity and enhance the coding gain without additional complexity. Specifically, the contributions of this paper are threefold.
\begin{itemize}
\item We study the performance of FA systems, where a port is scheduled based on selection combining, under general Nakagami fading channels. Three basic FA architectures are taken into account, namely, linear, circular and wheel-shaped, which exhibit different correlation patterns between the ports. We derive analytical expressions for the outage probability and show that the architecture with the best performance differs, depending on the FA size. Moreover, the provided asymptotic expressions quantify the system's diversity and outage gain. An FA achieves full channel (spatial) diversity, which is independent of the architecture, whereas its outage gain depends on the correlation pattern.
\item By considering errors due to delays, we present an analytical framework for deriving the outage probability of outdated channel estimates. As a result of the post-scheduling errors, the performance deteriorates significantly both in diversity as well as in outage gain; indeed, in some cases, the performance is inversely proportional to the number of ports. To recover the performance loss, we provide a linear prediction scheme based on channel knowledge obtained from previous training resource blocks. We show that with just a small number of training blocks, the performance is restored and maximum diversity can be obtained.
\item A port combining scheme is proposed, where the ports are sequentially activated for reception. Specifically, we combine space-time rotations with code diversity to design space-time coded modulations that have optimal performance over FA correlated block-fading channels. Unlike previous designs, we consider scenarios in which not all modulated symbols in the transmission vector are space-time rotated by introducing an average spreading factor. This will allow to achieve maximum code diversity while further reducing the receiver's complexity. Expressions for the pairwise error probability (PEP) and average word error rate (WER) are derived that give an accurate estimate for the performance of the coded modulations.
\end{itemize}

The rest of the paper is organized as follows. Section \ref{sys_model} describes the considered system model. Section \ref{sec_outdated} presents the analytical framework for the performance of port selection with pre-scheduling, post-scheduling and predicted channels. Section \ref{sec_coded} provides the coded modulation design and WER performance for port combining. Finally, Section \ref{numerical} presents the numerical results and Section \ref{conclusion} concludes the paper.

{\it Notation}: Lower and upper case boldface letters denote vectors and matrices, respectively; $[\cdot]^\top$ and $[\cdot]^\dag$ denote the transpose and the transpose conjugate, respectively; $||\qH||$ gives the Frobenius norm of matrix $\qH$; $\qI_N$ denotes the $N\times N$ identity matrix; $\text{diag}(a_1, \dots, a_N)$ denotes a diagonal matrix with entries $a_1,\dots,a_N$ on the main diagonal; $\PP\{X\}$ and $\E\{X\}$ represent the probability and expectation of $X$, respectively; $Q(\cdot)$ is the $Q$-function; $\Gamma(\cdot)$, $\Gamma(\cdot,\cdot)$ and $\gamma(\cdot,\cdot)$ denote the complete, upper, and lower incomplete gamma function, respectively; $Q_m(\cdot,\cdot)$ denotes the Marcum-$Q$ function of order $m$; $J_n(\cdot)$ and $I_n(\cdot)$ are the Bessel function and the modified Bessel function, respectively, of the first kind and order $n$ \cite{GRAD}; $\jmath = \sqrt{-1}$ is the imaginary unit; $\mathds{1}_X$ is the indicator function of $X$ with $\mathds{1}_X = 1$, if $X$ is true, and $\mathds{1}_X = 0$, otherwise.

\section{System Model}\label{sys_model}
Consider a point-to-point network, with a conventional single-antenna transmitter and a single-FA receiver. The FA uses a single RF chain and consists of $N$ ports, evenly distributed over a dielectric holder of a specific topological space, defined in the next sub-section. We assume that the FA can switch on a single port by displacing the employed liquid to its location with a mechanical pump \cite{HUANG}. The transmitter utilizes a fixed transmission power $P$.

\begin{figure*}[t]\centering	\subfloat[Linear.]{\includegraphics[height=4.4cm]{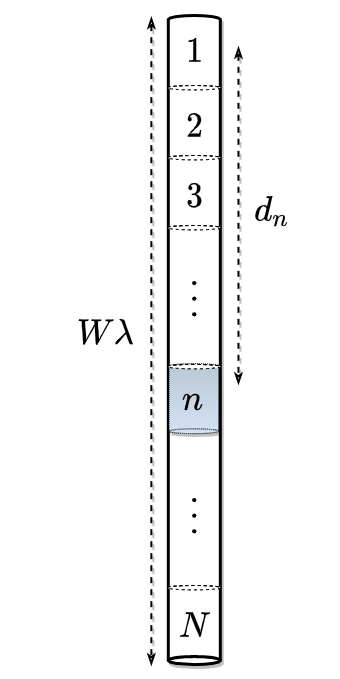}\label{fig1a}}\hspace{12mm}
\subfloat[Circular.]{\includegraphics[height=4.4cm]{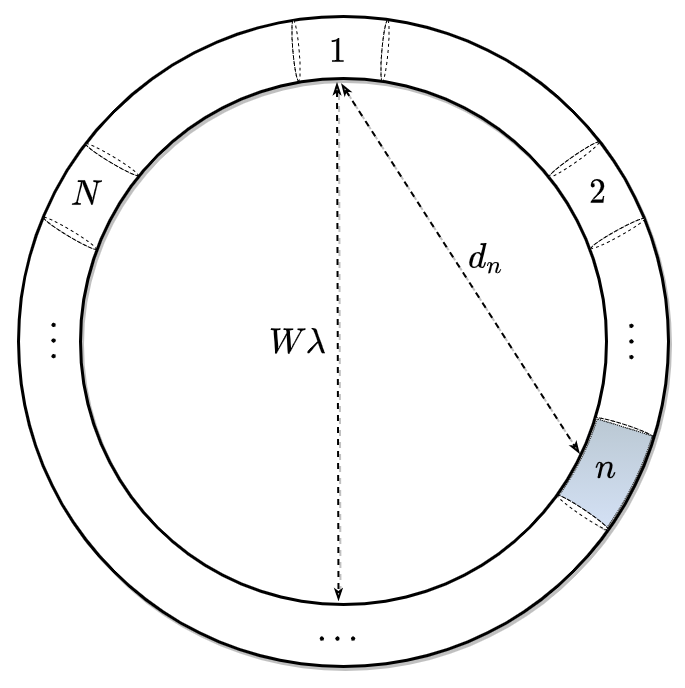}\label{fig1b}}\hspace{12mm}
\subfloat[Wheel.]{\includegraphics[trim=40 60 50 60, clip,height=4.4cm]{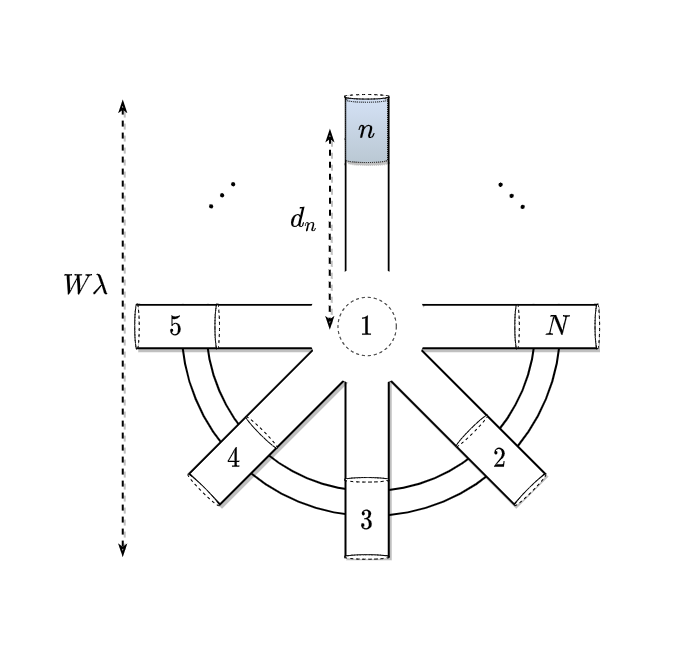}\label{fig1c}}
\caption{Considered FA architectures.}\label{fig1}
\end{figure*}

\subsection{FA Architectures}\label{fa_arch}
We consider three FA architectures\footnote{These three topologies are considered for the sake of studying the effect of different correlation patterns on the performance of FAs. Nevertheless, the presented analytical framework is general and is not limited to these specific geometric topologies.}: a uniform linear array \cite{HEATH, KIT} (Fig. \ref{fig1}\subref{fig1a}), a uniform circular (toroidal) array \cite{HEATH, XING} (Fig. \ref{fig1}\subref{fig1b}), and a wheel-shaped array \cite{LIN} (Fig. \ref{fig1}\subref{fig1c}). The size of each topology is characterized by the value $W\lambda$, where $\lambda$ is the wavelength of the transmitted signals. Moreover, due to their specific topological attributes, each architecture exhibits different displacements values between the ports. Specifically, the displacement (i.e. the Euclidean distance) at the $n$-th port from the first one is given by
\begin{align}
d_n = \frac{n-1}{N-1} W\lambda,
\end{align}
for the linear topology \cite{KIT}. In this case, the displacement increases with $n$ and so the maximum displacement is unique and equal to $d_N$. Moreover, all $N$ displacements are distinct, i.e. $d_1 \neq d_2 \neq \cdots \neq d_N$. For the circular topology, the displacement can be written as
\begin{align}
d_n = \sin\left(\frac{n-1}{N}\pi\right)W\lambda,
\end{align}
which essentially corresponds to the circle's chord formed by the two ports. Here, for $N$ even, the maximum displacement is unique and equal to $d_{\frac{N}{2}+1}$, whereas for odd $N$, the maximum displacements are $d_{\frac{N+1}{2}} = d_{\frac{N+3}{2}}$. Also, the remaining displacements are pairwise equal, that is, $d_n = d_{N-n+2}$. Finally, for the wheel-shaped topology, we have
\begin{align}
d_n = \mathds{1}_{n>1}\frac{W\lambda}{2},
\end{align}
where it is clear that it is independent of the port's location for $n > 1$, i.e. $d_2 = d_3 = \cdots = d_N$. The displacements are illustrated in Fig. \ref{fig1} for each topology.

\subsection{Channel Model}
All wireless links are assumed to exhibit Nakagami fading with integer parameter $m \geq 1$; $m=1$ corresponds to Rayleigh fading and $m \to \infty$ to deterministic no-fading scenario. We represent the Nakagami-$m$ fading channels at the $N$ ports by a set of $Nm$ complex Gaussian random variables, as follows \cite{NB}
\begin{align}
g_{1,k} &= x_{1,k} + \jmath y_{1,k},\label{rv1}\\
g_{n,k} &= \rho_n g_{1,k} + \sqrt{1-\rho_n^2} (x_{n,k} + \jmath y_{n,k}),\label{rv2}
\end{align}
where $x_{n,k}$ and $y_{n,k}$ are independent Gaussian random variables with zero mean and variance $\sigma^2/2$, $\sigma > 0$, i.e. $x_{n,k}, y_{n,k} \sim \mathcal{N}(0,\sigma^2/2)$, for $k = 1,\dots, m$ and $n = 1,\dots,N$. Let $h_n$, $n = 1,\dots,N$, be the normalized sum of the squared magnitude of $g_{n,k}$, that is,
\begin{align}\label{sum}
h_n = \frac{1}{m} \sum_{k=1}^m |g_{n,k}|^2.
\end{align}
In this way, the random variables $\sqrt{h_n}$ are correlated Nakagami-$m$ fading envelopes with $\E\{h_n\} = \sigma^2$ \cite{NB}. So $h_n$ is the normalized sum of $m$ independent squared Rayleigh envelopes or, equivalently, the sum of $2m$ independent squared zero-mean Gaussian random variables with variance $\sigma^2/2$ \cite{NB}; throughout this paper, we will consider $\sigma^2 = 1$ for simplicity but the generalization to any $\sigma^2$ is straightforward and could be used to consider path-loss and other attenuation effects. By having the first port as the reference point, the correlation coefficient between $h_n$ and $h_1$ can be modeled by\footnote{Note that alternative spatial correlation models were recently proposed in \cite{KIT5,ALOUINI} which differ slightly from the one used in this paper. It would be an interesting future work to extend the work of this paper using the new models.} \cite{HEATH}
\begin{align}\label{correlation}
\rho_n = J_0\left(2\pi\frac{d_n}{\lambda}\right),
\end{align}
where $d_n$ is defined based on the considered topology. Note that the observations regarding the displacements for each topology also hold for the correlation coefficients. Finally, we assume that the channel coefficients are known to the FA receiver, but not the transmitter.

\section{Port Selection with Outdated Channel Estimates}\label{sec_outdated}
In this section, we study the performance of port selection in the considered FA system with outdated channel estimates. Specifically, the FA employs a selection combiner and thus chooses the port with the strongest received signal. Knowledge of the channels at each port is obtained through a training period, where each port is activated sequentially. Furthermore, we assume that the channel estimation is perfect. However, since the channel changes rapidly, the estimated (pre-scheduling) channels are subject to delays \cite{JLV}; these delays correspond to the duration needed for the liquid to be displaced at each port. As such, by the time the port with the best estimate is scheduled (i.e. switched on), that estimation may be outdated. We first focus on the performance of the pre-scheduling estimated channels and then of the post-scheduling (outdated) ones. Finally, we propose a prediction scheme that overcomes the negative effects imposed by the delays.

\subsection{Pre-scheduling Channel Estimates}
To facilitate the analysis, we let $\h_n$ denote the estimated received channel at the $n$-th port and $\h = \max\{\h_1,\h_2,\dots,\h_N\}$; even though the channel coefficients are a function of time, we drop the time-index for the sake of brevity. The theorem below provides the cumulative distribution function (CDF) of the estimated $\h$.

\begin{theorem}\label{thm1}
The CDF of the estimated $\h$ is given by
\begin{align}\label{cdf1}
F_{\h}(x) = \frac{1}{\Gamma (m)} \int_0^x \exp (-z) z^{m-1} \prod_{n=2}^N \phi_n(z,x) dz,
\end{align}
where
\begin{align}\label{phi}
\phi_n(z,x) = 1-Q_m\left(\sqrt{\frac{2 z \rho_n^2}{1-\rho_n^2}},\sqrt{\frac{2x}{1-\rho_n^2}}\right),
\end{align}
and $\rho_n$ is given by \eqref{correlation}.
\end{theorem}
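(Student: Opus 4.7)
The plan is to prove the formula by conditioning on the reference channel $h_1$ and exploiting the fact that, given the common Gaussian terms $(x_{1,k}, y_{1,k})_{k=1}^m$, the remaining squared envelopes $h_2, \ldots, h_N$ become mutually independent. This turns the multivariate CDF into an integral of a product of marginal conditional CDFs, which is the structure of \eqref{cdf1}.

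First I would write $F_{\h}(x) = \PP\{\h_1 \leq x, \ldots, \h_N \leq x\}$ and use the tower property, conditioning on the vector $\qg_1 = (g_{1,1}, \ldots, g_{1,m})$. Looking at \eqref{rv2}, for $n \geq 2$, the variable $g_{n,k}$ splits into an independent innovation $\sigma\sqrt{1-\rho_n^2}(x_{n,k}+\jmath y_{n,k})$ and a deterministic term $\rho_n g_{1,k}$. Since the innovations are independent across $n$, the random variables $h_2, \ldots, h_N$ are conditionally independent given $\qg_1$, which lets me factor $\PP\{h_2\leq x, \ldots, h_N\leq x \mid \qg_1\} = \prod_{n=2}^{N}\PP\{h_n\leq x \mid \qg_1\}$.

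Next I would identify each conditional distribution. Setting $\sigma = 1$, given $\qg_1$ the quantity $g_{n,k}$ is complex Gaussian with mean $\rho_n g_{1,k}$ and per-component variance $(1-\rho_n^2)/2$, hence $\frac{2|g_{n,k}|^2}{1-\rho_n^2}$ follows a non-central $\chi^2_2$ with non-centrality $\frac{2\rho_n^2|g_{1,k}|^2}{1-\rho_n^2}$. Summing over $k=1,\ldots,m$, the reproductive property of non-central chi-squares gives that $\frac{2 h_n}{1-\rho_n^2}\mid \qg_1$ is non-central $\chi^2_{2m}$ with non-centrality $\frac{2\rho_n^2 h_1}{1-\rho_n^2}$, i.e. depending on $\qg_1$ only through $h_1$. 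Invoking the standard Marcum-$Q$ expression for the non-central $\chi^2_{2m}$ CDF then yields $\PP\{h_n \leq x \mid h_1\} = \phi_n(h_1, x)$ as defined in \eqref{phi}. Because of this sufficiency, I can replace conditioning on $\qg_1$ by conditioning on $h_1$, and averaging against the Gamma$(m,1)$ density $f_{h_1}(z) = z^{m-1}e^{-z}/\Gamma(m)$ of $h_1$ over $z\in[0,x]$ (the event $\{h_1\leq x\}$) reproduces exactly \eqref{cdf1}.

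The main obstacle is the careful bookkeeping of the scale factors in the chi-square representation, specifically verifying that the common-component contribution combines into the single non-centrality $2\rho_n^2 h_1/(1-\rho_n^2)$ via the reproductive property. Once that sum-of-non-central-chi-squares step is made rigorous, the remainder of the argument is a routine application of Bayes/total-probability and identification of the Marcum-$Q$ CDF.
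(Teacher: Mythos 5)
Your proposal is correct and follows essentially the same route as the paper's proof: conditioning on the first port's underlying Gaussians (equivalently on $h_1$), exploiting the resulting conditional independence of $h_2,\dots,h_N$, identifying each conditional law as a scaled non-central chi-square with $2m$ degrees of freedom and non-centrality $2\rho_n^2 h_1/(1-\rho_n^2)$, and then integrating the product of Marcum-$Q$ CDFs against the Gamma$(m,1)$ density of $h_1$ over $[0,x]$. The only cosmetic difference is that you work directly with the conditional complex Gaussian mean $\rho_n g_{1,k}$ rather than the paper's rescaled real variables $\tilde{x}_{n,k},\tilde{y}_{n,k}$, which changes nothing substantive.
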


\begin{proof}
See Appendix \ref{thm1_prf}.
\end{proof}

For the Rayleigh case ($m=1$), the above CDF reduces to the one in \cite{KIT}. Theorem \ref{thm1} is general and applies to any FA topology, including the ones in Fig. \ref{fig1}. It is important to point out that the performance of the FA improves as the correlation coefficients get smaller. To better show this behavior, we take an asymptotic approach and derive the achieved diversity order and outage gain \cite{TSE}. Firstly, we provide a series representation of the CDF to assist with the analysis. For the sake of convenience, we will denote
\begin{align}\label{s}
S \triangleq 1+ \sum_{n=2}^N \frac{\rho_n^2}{1-\rho_n^2}.
\end{align}

\begin{proposition}\label{pro1}
A series representation of the CDF of the estimated $\h$ can be written as
\begin{align}\label{cdf_series}
F_{\h}(x) = \frac{1}{\Gamma(m)} \sum_{k=0}^\infty \frac{c_k }{S^{m+k}}\gamma\left(m+k,Sx\right),
\end{align}
where
\begin{align}\label{ck}
c_k = \sum_{\substack{l_2,l_3,\dots,l_N \geq 0\\l_2+l_3+\cdots+l_N = k}} \alpha_{2,l_2} \alpha_{3,l_3} \cdots \alpha_{N,l_N},
\end{align}
and
\begin{align}
\alpha_{n,l} = \left(\frac{\rho_n^2}{1-\rho_n^2}\right)^l \frac{\gamma \left(m+l,\frac{x}{1-\rho_n^2}\right)}{l!\Gamma(m+l)}.
\end{align}
\end{proposition}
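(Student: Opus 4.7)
The plan is to expand each factor $\phi_n(z,x)$ in \eqref{phi} as a power series in $z$, multiply the $N-1$ resulting series together using a Cauchy-product argument, and then integrate termwise to obtain a sum of lower incomplete gamma functions.

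First, I would invoke the standard series representation of the Marcum-$Q$ function for integer order,
\begin{align*}
1 - Q_m(a,b) = e^{-a^2/2} \sum_{l=0}^{\infty} \frac{(a^2/2)^l}{l!}\,\frac{\gamma(m+l,b^2/2)}{\Gamma(m+l)},
\end{align*}
and apply it to $\phi_n(z,x)$ with $a^2/2 = z\rho_n^2/(1-\rho_n^2)$ and $b^2/2 = x/(1-\rho_n^2)$. Writing $\mu_n \triangleq \rho_n^2/(1-\rho_n^2)$, this gives
\begin{align*}
\phi_n(z,x) = e^{-\mu_n z}\sum_{l=0}^{\infty} z^l \alpha_{n,l},
\end{align*}
since the constant (in $z$) factor $\mu_n^l \gamma(m+l,x/(1-\rho_n^2))/(l!\,\Gamma(m+l))$ is precisely $\alpha_{n,l}$.

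Next, I would take the product over $n=2,\ldots,N$ and collect exponentials,
\begin{align*}
\prod_{n=2}^{N}\phi_n(z,x) = \exp\!\bigg(-z\sum_{n=2}^{N}\mu_n\bigg) \prod_{n=2}^{N}\sum_{l_n=0}^{\infty} z^{l_n}\alpha_{n,l_n}.
\end{align*}
Applying the Cauchy product of $N-1$ absolutely convergent power series, the product of sums becomes $\sum_{k=0}^{\infty} c_k z^k$ with $c_k$ exactly as in \eqref{ck}. Substituting into \eqref{cdf1} and using the definition of $S$ in \eqref{s} so that $1+\sum_{n=2}^N \mu_n = S$, interchanging sum and integral (justified by absolute convergence on the compact interval $[0,x]$) yields
\begin{align*}
F_{\h}(x) = \frac{1}{\Gamma(m)}\sum_{k=0}^{\infty} c_k \int_0^x e^{-Sz}\, z^{m+k-1}\, dz.
\end{align*}
Finally, the change of variable $u = Sz$ gives $\int_0^x e^{-Sz} z^{m+k-1}dz = \gamma(m+k,Sx)/S^{m+k}$, which matches \eqref{cdf_series}.

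The only non-routine step is identifying the correct series expansion of $1-Q_m(a,b)$ that separates the $a$ (i.e.\ $z$) and $b$ (i.e.\ $x$) dependencies into a product form: without that separation, one cannot cleanly reduce the product $\prod_n \phi_n$ to a single power series in $z$ whose coefficients depend only on $x$. Once that representation is in place, the remainder of the argument is a bookkeeping exercise in the Cauchy product and a standard gamma-integral evaluation; the convergence swap is immediate because the inner series has nonnegative terms and converges uniformly in $z$ on $[0,x]$.
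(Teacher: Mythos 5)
Your proof is correct and follows essentially the same route as the paper's: expand each $\phi_n(z,x)$ via the Marcum-$Q$ series, absorb the exponentials into $e^{-Sz}$, take the Cauchy product of the $N-1$ power series to get $\sum_k c_k z^k$, and integrate termwise to produce $\gamma(m+k,Sx)/S^{m+k}$. The only cosmetic caveat is your local use of $\mu_n$ for $\rho_n^2/(1-\rho_n^2)$, which collides with the paper's later use of $\mu_n$ for the outdated-channel correlation, but this does not affect the argument.
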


\begin{proof}
See Appendix \ref{pro1_prf}.
\end{proof}

Even though Proposition \ref{pro1} provides an infinite series representation, the number of terms that are actually required to achieve a certain accuracy can be finite. Now, for $\rho_n = 0$, $\forall n$, \eqref{cdf_series} reduces to the independent case as
\begin{align}
F_{\h}(x) = \left(\frac{1}{\Gamma(m)} \gamma\left(m,x\right)\right)^N,
\end{align}
since $S=1$ and all $k>0$ terms are equal to zero. This scenario can be realized with a large enough $W$ and, in particular, with the wheel topology (see Fig. \ref{fig1c} above and Fig. \ref{fig2c} below). Now, the estimated signal-to-noise ratio (SNR) at the $n$-th port is given by
\begin{align}
\eh_n = \frac{P}{\nu^2} \h_n,
\end{align}
with average SNR $\ee = \E[\eta_n] = P/\nu^2$, where $\nu^2$ is the variance of the additive white Gaussian noise (AWGN). Let $\eh$ be the largest estimated SNR, i.e. $\eh = \max\{\eh_1,\dots,\eh_n\}$. Then, the outage probability can be written as
\begin{align}\label{op}
P_{\rm o}(\theta) &= \PP\{\log_2(1+\eh) < \theta\} = F_{\h}\left(\frac{m}{\bar{\eta}}(2^\theta-1)\right),
\end{align}
where $\theta$ is a pre-defined rate threshold. Therefore, from Proposition \ref{pro1}, one can easily obtain an asymptotic expression for the outage probability.

\begin{corollary}\label{cor1}
For high SNR values, the outage probability of the estimated $\eh$ is approximated by
\begin{align}
\lim_{\bar{\eta} \to \infty} P_{\rm o}(\theta) \approx \left(\frac{m}{\bar{\eta}}(2^\theta-1)\right)^{mN} \frac{\Gamma(m+1)^{-N}}{\prod_{n=2}^N (1-\rho_n^2)^m}.
\end{align}
\end{corollary}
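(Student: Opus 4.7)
The plan is to substitute the series form of the CDF from Proposition \ref{pro1} into \eqref{op} and extract the leading-order behavior as $x = \frac{m}{\bar{\eta}}(2^\theta-1) \to 0$. Since $\bar{\eta}\to\infty$ forces $x\to 0$, it suffices to compute the first nonvanishing term of the Taylor expansion of $F_{\h}(x)$ about $x=0$.

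First, I would invoke the well-known small-argument behavior of the lower incomplete gamma function, namely $\gamma(a,y) = \frac{y^a}{a} + O(y^{a+1})$ as $y\to 0^+$. Applied to \eqref{cdf_series}, the $k$-th outer term yields $\frac{c_k}{\Gamma(m)S^{m+k}}\gamma(m+k,Sx)\sim \frac{c_k (Sx)^{m+k}}{(m+k)\Gamma(m) S^{m+k}} = \frac{c_k\, x^{m+k}}{\Gamma(m+k+1)}$. Next, I would apply the same expansion to each factor inside $c_k$: from the definition of $\alpha_{n,l}$, one gets $\alpha_{n,l} \sim \bigl(\tfrac{\rho_n^2}{1-\rho_n^2}\bigr)^l \frac{x^{m+l}}{(1-\rho_n^2)^{m+l}\,l!\,\Gamma(m+l+1)}$ as $x\to 0$, so every $\alpha_{n,l}$ is of order $x^{m+l}$. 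Because each summand of $c_k$ is a product $\alpha_{2,l_2}\cdots\alpha_{N,l_N}$ with $l_2+\cdots+l_N=k$, we obtain $c_k = O(x^{m(N-1)+k})$, and therefore the $k$-th term of $F_{\h}(x)$ scales as $x^{mN+2k}$.

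Consequently, the unique dominant contribution is $k=0$, for which $c_0 \sim \prod_{n=2}^N \frac{x^m}{\Gamma(m+1)(1-\rho_n^2)^m}$ and $\gamma(m,Sx)\sim (Sx)^m/m$. Multiplying these pieces together and using $m\,\Gamma(m)=\Gamma(m+1)$, the $S^m$ factors cancel and one lands on
\begin{align*}
F_{\h}(x) \;=\; \frac{x^{mN}}{\Gamma(m+1)^N \prod_{n=2}^N (1-\rho_n^2)^m} \;+\; O(x^{mN+2}).
\end{align*}
Substituting $x=\frac{m}{\bar{\eta}}(2^\theta-1)$ via \eqref{op} gives exactly the claimed asymptotic.

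The main obstacle is the bookkeeping in Step 2: one has to be careful that the hidden $x$-dependence inside $c_k$ is tracked correctly so as to conclude that every $k\ge 1$ term is genuinely negligible relative to the $k=0$ term. A secondary subtlety is justifying the termwise Taylor expansion of the infinite series in \eqref{cdf_series}; since each coefficient in $\alpha_{n,l}$ contains $\gamma(m+l,\cdot)\le\Gamma(m+l)$ and the series converges absolutely on compact sets (as established in the proof of Proposition \ref{pro1}), dominated convergence on $[0,x_0]$ for a small $x_0$ lets one exchange summation and the limit $x\to 0$ without difficulty.
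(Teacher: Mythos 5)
Your proposal is correct and follows essentially the same route as the paper's proof: substitute the series of Proposition \ref{pro1} into \eqref{op}, apply $\gamma(a,y)\sim y^a/a$ as $y\to 0^+$ both to the outer terms and inside the $\alpha_{n,l}$, observe that the hidden $x$-dependence of $c_k$ makes the $k$-th term of order $x^{mN+2k}$ so that $k=0$ dominates, and simplify with $m\Gamma(m)=\Gamma(m+1)$. (One immaterial slip: the constant for the $k$-th outer term should be $(m+k)\Gamma(m)$ rather than $\Gamma(m+k+1)$ for $k\ge 1$, but those terms are discarded anyway, and your explicit order bookkeeping for $k\ge 1$ is actually more careful than the paper's bare assertion that the $k=0$ term dominates.)
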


\begin{proof}
See Appendix \ref{cor1_prf}.
\end{proof}

Now, if the outage probability of a scheme behaves like $P_o(\theta) \approx G_o \bar{\eta}^{-G_d}$ at high SNRs, then $G_o$ is said to be the outage gain and $G_d$ is the scheme's diversity order given by \cite{TSE}
\begin{align*}
G_d = -\lim_{\bar{\eta}\to\infty} \frac{\log(P_o(\theta))}{\log(\bar{\eta})}.
\end{align*}
Therefore, based on Corollary \ref{cor1}, the diversity order of the considered FA system is $G_d = mN$ and the outage gain is equal to
\begin{align}\label{cg}
G_o = \frac{(m(2^\theta-1))^{mN}}{\Gamma(m+1)^N\prod_{n=2}^N (1-\rho_n^2)^m}.
\end{align}
It is well-known that the diversity defines the slope of the outage probability's curve and the outage gain represents the ``distance'' to the vertical axis \cite{TSE}, i.e., the smaller the value $G_o$, the better. From \eqref{cg}, we can observe that correlation negatively affects the outage gain. It follows that the achieved diversity is independent of the FA's topology but the employed topological space characterizes the achieved outage gain. As such, the minimum outage gain is obtained by the independent case, that is, when $\rho_n = 0$, $\forall n$.

\subsection{Post-scheduling Channels}
In what follows, we turn our attention to the outdated scenario. Thus, we focus on the performance of the post-scheduling $h$ conditioned on the pre-scheduling estimation $\h$. In this case, the instantaneous channels at the $n$-th port can be written as \cite{DM}
\begin{align}
g_{n,k} &= \sqrt{1-\mu_n^2} q_{n,k} + \mu_n\hat{g}_{n,k},
\end{align}
where $\hat{g}_{1,k} = \hat{x}_{1,k} + \jmath \hat{y}_{1,k}$ and $\hat{g}_{n,k} = \sqrt{1-\rho_n^2} (\hat{x}_{n,k} + \jmath \hat{y}_{n,k}) + \rho_n \hat{g}_{1,k}$ are the estimated channels with $\hat{x}_{n,k}, \hat{y}_{n,k} \sim \mathcal{N}(0,\sigma^2/2)$ and $q_{n,k} \sim \mathcal{CN}(0,\sigma^2)$ for $k = 1,\dots, m$ and $n = 1,\dots,N$; as before, we assume $\sigma = 1$. As such, the error between the pre-scheduling estimate $\h_n$ and the actual post-scheduling $h_n$ is captured by the correlation parameter $\mu_n = J_0(2\pi f T_n)$ \cite{JLV}, where $f$ is the Doppler frequency and $T_n$ is the delay between the estimation and the activation of the $n$-th port. Obviously, the delay $T_n$ is proportional to the time it needs to estimate the channel at a port \cite{JLV}, the size of the topology, the liquid's chemical properties but also the efficiency of the employed pump mechanism \cite{HUANG,PARACHA}. 

Now, let $E_n$ denotes the event that the $n$-th port has been activated, that is, the estimate at the $n$-th port is the maximum. Then, the CDF of the post-scheduling $h$ is given by
\begin{align}
F_h(x) &= \sum_{n=1}^N \int_0^\infty F_{h_n | \h_n}(x | \h_n = y) f_{\h_n | E_n}(y | E_n) p_n dy\nonumber\\
&= \sum_{n=1}^N \int_0^\infty F_{h_n | \h_n}(x | \h_n = y) f_{\h_n}(y , E_n) dy,\label{cdf2}
\end{align}
where $F_{h_n | \h_n}(\cdot | \h_n)$ and $f_{\h_n | E_n}(\cdot | E_n)$ are the conditional CDF and probability distribution function (PDF), respectively, and $p_n$ is the probability of scheduling the $n$-th port. Finally, using Bayes' theorem, $f_{\h_n}(y , E_n) = f_{\h_n | E_n}(y | E_n) p_n$, where $f_{\h_n}(y , E_n)$ is the joint PDF, given in the following proposition.

\begin{proposition}\label{prop2}
The joint PDF of $\h_n$ and $E_n$ is given by
\begin{align}\label{con_pdf1}
f_{\h_1}(x , E_1) = \frac{1}{\Gamma(m)} \exp (-x) x^{m-1} \prod_{k=2}^N\phi_k(x,x) dz,
\end{align}
for $n=1$, and
\begin{align}\label{con_pdf2}
&f_{\h_n}(x , E_n) = \frac{1}{\Gamma(m) (1-\rho_n^2)} \int_0^x \exp \left(-\frac{x+z}{1-\rho_n^2}\right)\nonumber\\
&\times\left(\frac{x z}{\rho_n^2}\right)^{\frac{m-1}{2}} I_{m-1}\left(\frac{2 \sqrt{x z \rho_n^2}}{1-\rho_n^2}\right) \prod_{\substack{k=2\\k\neq n}}^N\phi_k(z,x) dz,
\end{align}
for $n = 2, 3, \dots, N$.
\end{proposition}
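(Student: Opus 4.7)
My plan is to compute the joint density $f_{\h_n, E_n}(x)$ of the estimated channel $\h_n$ together with the port-selection event $E_n = \{\h_n > \h_k \text{ for all } k \neq n\}$; consistency with \eqref{cdf2} shows that this is exactly the quantity the proposition denotes by $f_{\h_n \mid E_n}(x \mid E_n)$. The structural fact that drives everything is that \eqref{rv1}--\eqref{rv2} designate the first-port variables $\{x_{1,k}, y_{1,k}\}_{k=1}^{m}$ as the common source of correlation, so conditional on $\h_1$ the estimates $\h_2, \dots, \h_N$ are mutually independent (since the conditional law of each $\h_n$ depends on these variables only through $\sum_k (x_{1,k}^2 + y_{1,k}^2) = \h_1$). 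Moreover, \eqref{rv2} implies that $2\h_n / (1-\rho_n^2)$, given $\h_1 = z$, is a noncentral chi-squared with $2m$ degrees of freedom and noncentrality parameter $\lambda = 2\rho_n^2 z / (1-\rho_n^2)$; its CDF evaluated at $2x/(1-\rho_n^2)$ is precisely $\phi_n(z,x)$ from \eqref{phi} via the Marcum-$Q$ representation.

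For $n = 1$, the selection event becomes $\{\h_k < x, \, k = 2, \dots, N\}$ when $\h_1 = x$. Factorising by conditional independence given $\h_1 = x$,
\begin{align*}
f_{\h_1, E_1}(x) = f_{\h_1}(x) \prod_{k=2}^{N} \PP(\h_k < x \mid \h_1 = x),
\end{align*}
the Gamma$(m,1)$ marginal provides $\tfrac{1}{\Gamma(m)} e^{-x} x^{m-1}$ and each remaining factor is $\phi_k(x,x)$, giving \eqref{con_pdf1}.

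For $n \geq 2$, the first port is no longer the conditioning variable, so I would condition on $\h_1 = z$ and integrate, with the constraint $\h_1 < \h_n = x$ enforced by the upper limit of integration:
\begin{align*}
f_{\h_n, E_n}(x) = \int_0^x f_{\h_1, \h_n}(z, x) \prod_{\substack{k = 2 \\ k \neq n}}^{N} \phi_k(z, x) \, dz.
\end{align*}
The joint density $f_{\h_1, \h_n}(z, x) = f_{\h_n \mid \h_1}(x \mid z) \, f_{\h_1}(z)$ is the product of the Gamma$(m,1)$ prior with the scaled noncentral chi-squared density identified above. Two simplifications reduce this to the integrand in \eqref{con_pdf2}: the exponent telescopes as $-\tfrac{x + \rho_n^2 z}{1-\rho_n^2} - z = -\tfrac{x + z}{1-\rho_n^2}$, and the power-law prefactor as $\left(\tfrac{x}{\rho_n^2 z}\right)^{(m-1)/2} z^{m-1} = \left(\tfrac{xz}{\rho_n^2}\right)^{(m-1)/2}$, while the Bessel factor $I_{m-1}\!\left(\tfrac{2\sqrt{xz\rho_n^2}}{1-\rho_n^2}\right)$ carries over unchanged.

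The main obstacle will be the noncentral chi-squared bookkeeping: reading the degrees of freedom, noncentrality, and scale factor $(1-\rho_n^2)/2$ correctly from \eqref{rv1}--\eqref{rv2}, and then verifying that the exponent and power-law terms of the conditional density combine cleanly with the Gamma prior of $\h_1$ to produce the compact Bessel-function integrand in \eqref{con_pdf2}. Once those algebraic steps are performed, the remainder of the argument is purely structural and follows immediately from the conditional independence of $\h_2, \dots, \h_N$ given $\h_1$.
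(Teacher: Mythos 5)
Your proposal is correct and follows essentially the same route as the paper: conditional independence of $\h_2,\dots,\h_N$ given $\h_1$, the noncentral chi-square law of $\h_n$ given $\h_1=z$ yielding $\phi_n(z,x)$, Bayes factorization $f_{\h_1,\h_n}(z,x)=f_{\h_n\mid\h_1}(x\mid z)f_{\h_1}(z)$, and the constraint $\h_1<\h_n$ absorbed into the upper integration limit. The only (immaterial) difference is that you write the defective densities directly, whereas the paper first derives the conditional CDFs and then differentiates with respect to $x$; you also correctly note that the stated ``conditional PDFs'' are unnormalized joint densities with the event $E_n$, consistent with \eqref{cdf2}.
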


\begin{proof}
See Appendix \ref{prop2_prf}.
\end{proof}

\begin{figure*}[t]\centering
	\subfloat[Linear Topology.] {\includegraphics[height=4cm]{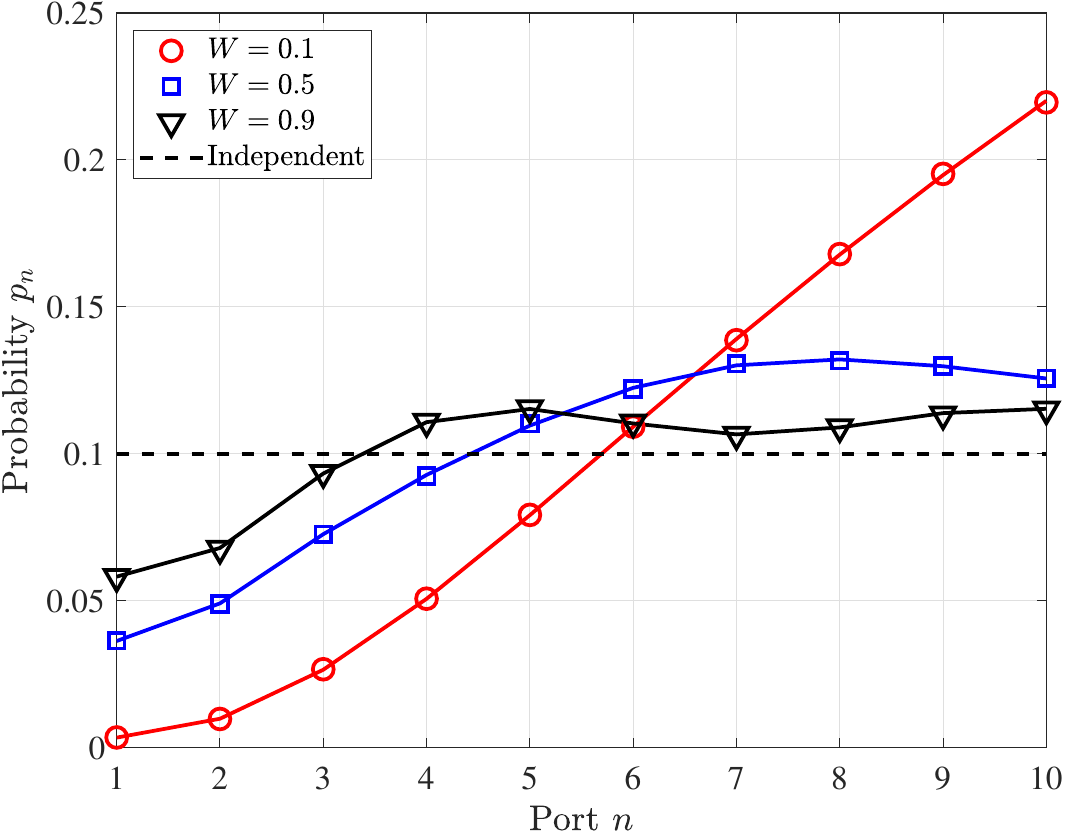}\label{fig2a}}\hspace{2mm}
	\subfloat[Circular Topology.] {\includegraphics[height=4cm]{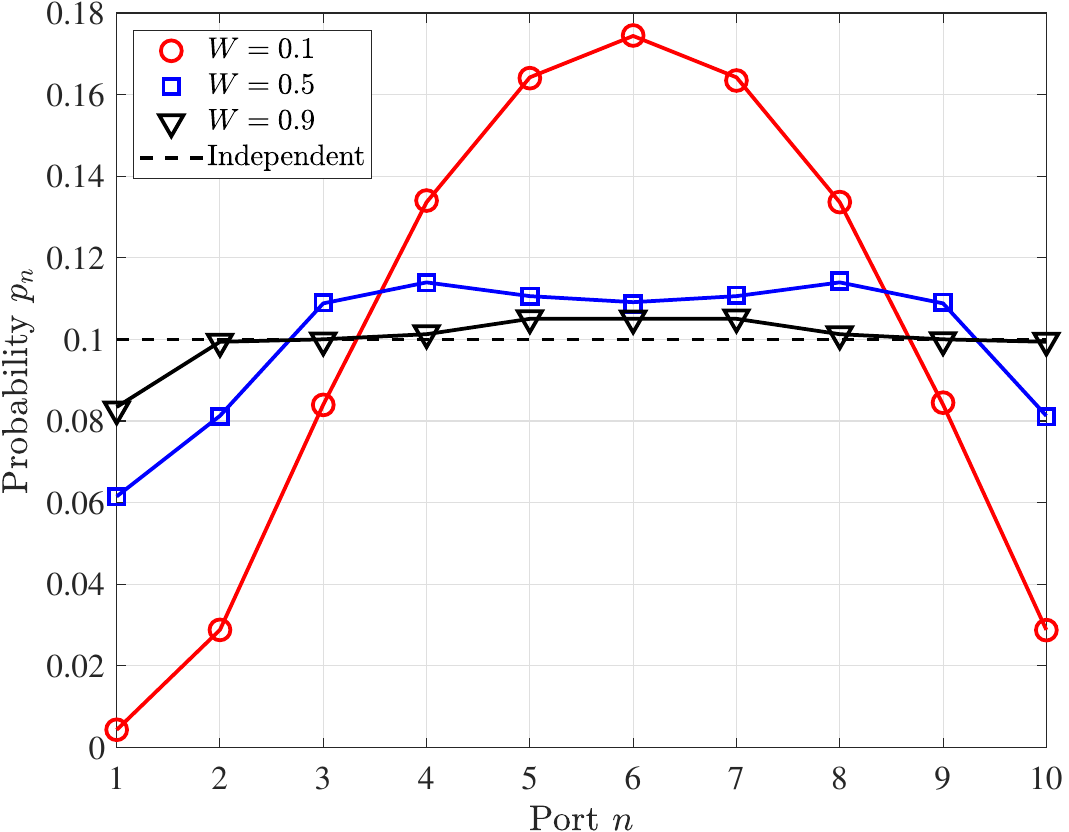}\label{fig2b}}\hspace{2mm}
	\subfloat[Wheel Topology.] {\includegraphics[height=4cm]{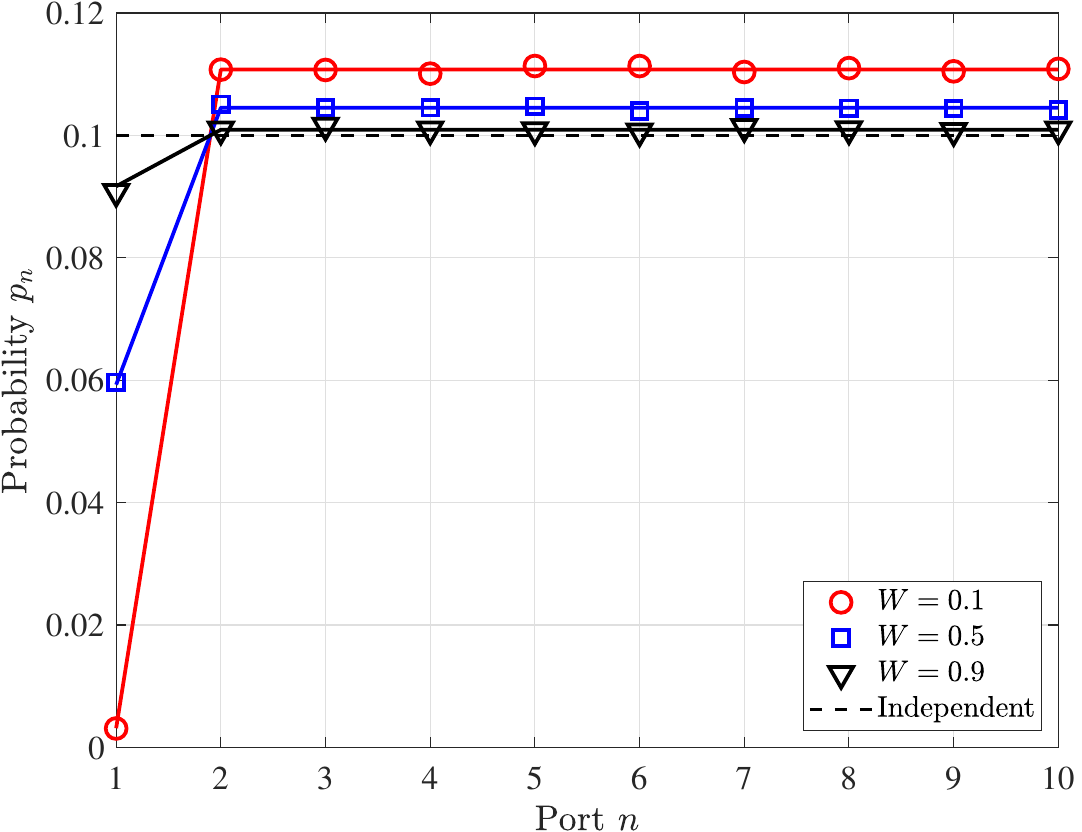}\label{fig2c}}
	\caption{Probability of selecting the $n$-th port; $N=10$.}\label{fig2}
\end{figure*}

The result in Proposition \ref{prop2} quantifies the effect of each port on the FA's overall performance. Indeed, one can easily evaluate the CDF of the estimated $\h$ by taking the sum of the joint CDFs, that is,
\begin{align}
F_{\h}(x) = \sum_{n=1}^N F_{\h_n}(x , E_n),\label{sum_con_cdf}
\end{align}
where $F_{\h_n}(x , E_n)$ are given in Appendix \ref{prop2_prf}; so \eqref{sum_con_cdf} corresponds to Theorem \ref{thm1}, albeit in a more complex form. Moreover, from Proposition \ref{prop2}, we can derive the probability of scheduling the $n$-th port, i.e.
\begin{align}
p_n = \int_0^\infty f_{\h_n}(x , E_n) dx.
\end{align}
These probabilities are illustrated in Fig. \ref{fig2}. Observe that for $\rho_n = 0$, $\forall n$, we have $Q_m(0,z) = \Gamma(m,z^2/2)/\Gamma(m)$ and the above expression gives $p_1 = p_2 = \cdots = 1/N$, which is true for the independent case. The non-uniform structure of the considered correlation model results in different selection probabilities for each port. This allows room for exploitation since one can focus on the ports that get selected more frequently in order to reduce both the delays and the complexity. Moreover, through correlation one can predict the channels at specific ports based on the channel knowledge at other ports. We can now state the final result.

\begin{theorem}\label{thm2}
The CDF of the post-scheduling $h$ with outdated channel estimates can be written as
\begin{align}\label{cdf3}
F_h(x) = 1-\sum_{n=1}^N \int_0^\infty &Q_m\left(\sqrt{\frac{2 y \mu_n^2}{1-\mu_n^2}},\sqrt{\frac{2 x}{1-\mu_n^2}}\right)\nonumber\\
&\times f_y(y , E_n) dy,
\end{align}
where $f_y(y,E_n)$ is given by Proposition \ref{prop2}.
\end{theorem}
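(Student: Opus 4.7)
The plan is to start from the total-probability expansion \eqref{cdf2}, derive the conditional CDF $F_{h_n \mid \h_n}(x \mid y)$ in closed form using the outdated channel model, and then simplify by noting that the conditional PDFs $f_{\h_n \mid E_n}(\cdot \mid E_n)$ integrate to the selection probabilities $p_n$, which sum to one.

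The first, main step is to identify the conditional law of $h_n$ given $\h_n$. From the outdated channel equations with $\sigma = 1$, conditional on $\hat{g}_{n,k}$, the real and imaginary parts of $g_{n,k}$ are independent Gaussian variables with means $\mu_n \operatorname{Re}(\hat{g}_{n,k})$ and $\mu_n \operatorname{Im}(\hat{g}_{n,k})$, respectively, and common variance $(1-\mu_n^2)/2$. Consequently, the rescaled sum
\begin{align*}
\frac{2\, h_n}{1-\mu_n^2} \;=\; \sum_{k=1}^{m} \frac{|g_{n,k}|^2}{(1-\mu_n^2)/2}
\end{align*}
is, conditional on $\{\hat{g}_{n,k}\}_{k=1}^m$, a non-central chi-square random variable with $2m$ degrees of freedom and non-centrality parameter
\begin{align*}
\lambda_n \;=\; \sum_{k=1}^m \frac{\mu_n^2 |\hat{g}_{n,k}|^2}{(1-\mu_n^2)/2} \;=\; \frac{2\mu_n^2\, \h_n}{1-\mu_n^2},
\end{align*}
which depends on $\{\hat{g}_{n,k}\}$ only through $\h_n$. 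Invoking the standard identity expressing the non-central chi-square CDF via the Marcum $Q$-function of order $m$ then yields
\begin{align*}
F_{h_n\mid \h_n}(x \mid y) \;=\; 1 - Q_m\!\left(\sqrt{\tfrac{2y\,\mu_n^2}{1-\mu_n^2}},\, \sqrt{\tfrac{2x}{1-\mu_n^2}}\right).
\end{align*}

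The second step is routine. Substituting this conditional CDF into \eqref{cdf2} and splitting the integrand linearly gives
\begin{align*}
F_h(x) \;=\; \sum_{n=1}^N \int_0^\infty f_{\h_n\mid E_n}(y\mid E_n)\,dy \;-\; \sum_{n=1}^N \int_0^\infty Q_m(\cdot,\cdot)\, f_{\h_n\mid E_n}(y\mid E_n)\,dy.
\end{align*}
The first sum collapses: by definition, $\int_0^\infty f_{\h_n\mid E_n}(y\mid E_n)\,dy = p_n$ is the probability that port $n$ is scheduled, and since exactly one port is always selected, $\sum_{n=1}^N p_n = 1$. This leaves precisely the expression claimed in Theorem~\ref{thm2}.

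The main obstacle is the first step, namely recognizing and rigorously verifying that the conditional distribution of $h_n$ given $\h_n$ depends on $\{\hat{g}_{n,k}\}$ only through the scalar $\h_n = \sum_k |\hat{g}_{n,k}|^2$ (so that integrating against the density of $\h_n \mid E_n$ is legitimate) and then matching the resulting non-central chi-square tail probability to the Marcum-$Q$ form with the correct scaling. Once this identification is in place, the remainder is a clean application of \eqref{cdf2} and the normalization of the selection probabilities from Proposition~\ref{prop2}.
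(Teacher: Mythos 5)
Your proposal is correct and follows essentially the same route as the paper: identify $h_n \mid \h_n$ as a (scaled) non-central chi-square with $2m$ degrees of freedom and non-centrality $2\h_n\mu_n^2/(1-\mu_n^2)$, express its CDF via the Marcum-$Q$ function, and substitute into \eqref{cdf2}. You merely fill in details the paper leaves implicit, namely the derivation of the conditional law and the normalization $\sum_n p_n = 1$ that produces the leading $1$.
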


\begin{proof}
By substituting the expressions of Proposition \ref{prop2} in \eqref{cdf2} and since $h_n \,|\, \h_n$ is a non-central chi-square random variable with $2m$ degrees of freedom and non-centrality parameter $2\h_n\mu_n^2/(1-\mu_n^2)$, the result follows. 
\end{proof}

Observe that the above expression is valid for $0 \leq \mu_n < 1$. If for a specific $n$ we have $\mu_n = 1$ (no delays), then the integral for the $n$-th term in \eqref{cdf3} is reduced to \eqref{ccdf1} or \eqref{ccdf2}, accordingly. Obviously, if $\mu_n = 1$ $\forall n$, the performance is given by Theorem \ref{thm1} (or by \eqref{sum_con_cdf}). On the other hand, if $\mu_n = 0$, $\forall n$, i.e. when the estimates are completely outdated (independent), $F_h(x)$ gives the performance of a randomly selected port, that is, $F_h(x) = 1 - \Gamma(m,x)/\Gamma(m)$, since $Q_m(0,z) = \Gamma(m,z^2/2)/\Gamma(m)$. We should also remark that a series representation of Theorem \ref{thm2} could be derived by following the same methodology as in Proposition \ref{pro1}, but we omit it for the sake of brevity. Nevertheless, we will provide a simplified asymptotic expression for high SNRs. Let $\eta$ denote the post-scheduling SNR, so the outage probability is $P_{\rm o}(\theta) = F_h\big(\frac{m}{\bar{\eta}}(2^\theta-1)\big)$.

\begin{corollary}\label{cor2}
For high SNR values, the outage probability for the post-scheduling $\eta$ simplifies to
\begin{align}
\lim_{\bar{\eta} \to \infty} P_{\rm o}(\theta) \approx &\frac{\left(\frac{m (2^\theta-1)}{\ee}\right)^m m \Gamma(mN)}{\Gamma(m+1)^{N+1} \prod _{n=2}^N (1-\rho_n^2)^m}\nonumber\\
&\times \bigg(\frac{(1-\mu_1^2)^{mN-m}}{\left(\mu_1^2+(1-\mu_1^2) S\right)^{mN}}\nonumber\\
&+\sum_{n=2}^N (1-\mu_n^2)^{mN-m} \left(\frac{1-\rho_n^2}{1-\mu_n^2 \rho_n^2}\right)^{m N}\bigg).
\end{align}
\end{corollary}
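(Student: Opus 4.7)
The strategy is to extract the leading small-$x$ term from the CDF expression in Theorem~\ref{thm2}, where $x = m(2^\theta-1)/\bar\eta \to 0$ in the high-SNR limit, and then evaluate the resulting Laplace-transform-type integrals against the conditional densities of Proposition~\ref{prop2} using the series machinery that underlies Proposition~\ref{pro1}. I first rewrite Theorem~\ref{thm2} as
\begin{align*}
F_h(x) = \sum_{n=1}^N \int_0^\infty \left[1 - Q_m\!\left(\sqrt{\tfrac{2y\mu_n^2}{1-\mu_n^2}},\, \sqrt{\tfrac{2x}{1-\mu_n^2}}\right)\right] f_{\h_n\mid E_n}(y\mid E_n)\, dy,
\end{align*}
which is legitimate because $\sum_n\int f_{\h_n\mid E_n}(y\mid E_n)\,dy = 1$, as seen from \eqref{sum_con_cdf} by sending $x\to\infty$. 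Then the small-second-argument expansion $1 - Q_m(a,b) = e^{-a^2/2}\sum_{k\ge 0}(a^2/2)^k \gamma(m+k,b^2/2)/[k!\,\Gamma(m+k)]$, combined with $\gamma(m,u)\sim u^m/m$ as $u\to 0^+$, gives
\begin{align*}
F_h(x) \sim \frac{x^m}{\Gamma(m+1)} \sum_{n=1}^N \frac{B_n}{(1-\mu_n^2)^m},\qquad B_n := \int_0^\infty e^{-s_n y} f_{\h_n\mid E_n}(y\mid E_n)\,dy,
\end{align*}
with $s_n = \mu_n^2/(1-\mu_n^2)$. The diversity order $m$ is already evident, and the remaining task is to evaluate the $N$ integrals $B_n$ in closed form.

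For $n=1$ I would insert \eqref{con_pdf1} and expand each $\phi_k(y,y)$ via the Marcum-$Q$ series used in the proof of Proposition~\ref{pro1}. The crucial step is that $\prod_{k=2}^N e^{-y\rho_k^2/(1-\rho_k^2)}$ combines with $e^{-y}e^{-s_1 y}$ into $\exp\{-[\mu_1^2+(1-\mu_1^2)S]\,y/(1-\mu_1^2)\}$, so that the remaining $y$-integral produces the denominator $[\mu_1^2+(1-\mu_1^2)S]^{mN}$ claimed in the statement once the multiple series in $(l_2,\dots,l_N)$ is collapsed into its dominant contribution, which also generates the numerator factor $(1-\mu_1^2)^{mN-m}$. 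For $n \ge 2$ I would proceed analogously from \eqref{con_pdf2}, first expanding $I_{m-1}(u)$ through $\sum_{j\ge 0}(u/2)^{m-1+2j}/[j!\Gamma(m+j)]$ and each remaining $\phi_k(z,x)$ as before. The inner $z$-integral then reduces to a Laplace transform of a Gamma density, and the ratio $(1-\rho_n^2)/(1-\mu_n^2\rho_n^2)$ appears naturally from combining the exponents $s_n y$, $(y+z)/(1-\rho_n^2)$, and the cross term $2\sqrt{yz\rho_n^2}/(1-\rho_n^2)$ coming from the Bessel series.

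The main obstacle is the combinatorial identity needed to reduce the multiple series over $(l_2,\dots,l_N)$ into the compact closed-form bracket of the statement; this is essentially the same identity that produced the coefficients $c_k$ in Proposition~\ref{pro1}, but now re-weighted by the outdated-channel exponential $e^{-s_n y}$, and it is what produces the common exponent $mN$ in both terms of the bracket through $\int_0^\infty y^{mN-1}e^{-T_n y}\,dy = \Gamma(mN)/T_n^{mN}$, with $T_1 = (\mu_1^2 + (1-\mu_1^2)S)/(1-\mu_1^2)$ and an analogous effective rate $T_n$ for $n\ge 2$. Once that identity is established, the final step is the substitution $x = m(2^\theta-1)/\bar\eta$ together with factoring out the common prefactor $m\Gamma(mN)/[\Gamma(m+1)^{N+1}\prod_{n=2}^N (1-\rho_n^2)^m]$, which yields the claimed expression.
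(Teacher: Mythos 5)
Your proposal is correct and follows essentially the same route as the paper: truncate the outer Marcum-$Q$ at its leading $x^m$ term to expose the diversity order $m$, reduce the coefficient to the Laplace-type integrals $B_n$ of the conditional densities from Proposition~\ref{prop2}, and evaluate these by replacing each density with its leading small-argument behavior $\propto y^{mN-1}e^{-(\cdot)y}$, which produces the $\Gamma(mN)/T_n^{mN}$ factors and the ratios $(1-\rho_n^2)/(1-\mu_n^2\rho_n^2)$. The ``main obstacle'' you flag is resolved in the paper exactly as you anticipate --- by retaining only the $k=0$ term of the Marcum-$Q$ series together with the leading power of each incomplete gamma and Bessel function (the paper packages this as the substitution $y\to mt/\ee$ followed by $\ee\to\infty$), so your argument and the paper's coincide, including in the heuristic nature of that truncation.
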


\begin{proof}
See Appendix \ref{cor2_prf}.
\end{proof}

As expected, when the channel estimates are outdated, the system is dispossessed of the channel diversity and the achieved diversity is reduced to $m$. Attaining full channel diversity would require $\mu_n \to 1$, which could be realized with extremely small FAs, i.e. $W \to 0$. Still, this may be impractical and so, in most cases, post-scheduling error due to delays is an inherent characteristic of FAs. Therefore, in what follows, we consider a linear channel prediction scheme in order to overcome this limitation \cite{KAY}.

\subsection{Linear Prediction Scheme}
We assume that the FA receiver obtains a sequence of channel estimates for each port during a training phase of duration $D$ resource blocks. In other words, the receiver obtains the following for the $n$-th port
\begin{align}
\hat{\qg}_{n,k} = [\hat{g}_{n,k}(t-(D-1)),\dots,\hat{g}_{n,k}(t-1),\hat{g}_{n,k}(t)]^\top,
\end{align}
where $\hat{g}_{n,k}(t)$ is the estimated channel at the $t$-th block for $k = 1,\dots,m$. As before, we assume perfect channel estimation at the receiver. The channel estimates for a fixed resource block are spatially correlated over different ports based on the autocorrelation function defined in \eqref{correlation}. Moreover, for fixed $n$ and $k$, the channel estimates over different resource blocks are temporally correlated, i.e. $\E\{\hat{g}_{n,k}(t_1)\hat{g}^*_{n,k}(t_2)\} = J_0(2\pi f |t_1-t_2| \tau_d),$ where $\tau_d$ is the duration of each resource block \cite{SP}. Now the aim is to predict the channel $\tilde{g}_{n,k}(t+l)$, $l$ steps ahead, for integer $l \geq 1$ based on $\hat{\qg}_{n,k}$, and so we can write \cite{KAY}
\begin{align}
\tilde{g}_{n,k}(t+l) = \qa_{n,k}^\dag \hat{\qg}_{n,k},
\end{align}
where $\qa_{n,k} \in \mathbb{C}^{1\times D}$ is the prediction coefficient vector. Due to perfect channel estimation, the optimal $\qa$ that minimizes the mean square error (MSE) is given by
\begin{align}
\qa = \qR^{-1} \qr,
\end{align}
in which, $\qR$ is the autocorrelation matrix with entries $\qR_{ij} = J_0(2\pi f |i-j| \tau_d),$ and $\qr_i = J_0(2\pi f |l+i-1| \tau_d)$, $i,j = 1,2,\dots,D$; the prediction vector can be evaluated for any integer $l$ but its accuracy diminishes as $l$ increases. Also, remark that by using the optimal solution, the prediction vector $\qa$ is the same for all ports. Finally, the correlation coefficient between the actual and the predicted channel is given by \cite{SP}
\begin{align}
\mu_0 &= \E\{\tilde{g}_{n,k}(t+l)g^*_{n,k}(t+l)\} = \sqrt{\qr^\dag \qR^{-1} \qr},
\end{align}
irrespective of the considered port. The CDF of the predicted channel can be obtained using Theorem \ref{thm2} and by setting $\mu_n = \mu_0$, $n=1,\dots,N$.

\section{Port Combining with Coded Modulation}\label{sec_coded}
In the previous section, we have derived outage probability expressions for FA port selection, where the best port is scheduled for reception. Here, we consider coded modulation design and analysis for FA port combining, in which all the ports of the FA are used for reception. Although it results in  higher receiver complexity, this combining method is superior to port selection as it exploits all the available resources, i.e. the FA ports. Based on a bound on the diversity order, we propose a low-complexity scheme through which the channel decoder can collect some amount of diversity at no additional complexity, while the remaining diversity is collected by the demodulator. Since the focus of this section is on the coded modulation design, we ignore any delays and imperfections due to the channel estimation.

\subsection{Coded Modulation Design}
We consider the transmission of a length-$\mathcal{L}$ code over an FA block-fading channel in which the $N$ ports are activated sequentially within the FA topology. For the sake of simplicity, we will focus on the linear FA (Fig. \ref{fig1a}) and the Rayleigh fading case (i.e. $m=1$). The channel model is given by
$
\qy = \qz \qS \qG + \qw = \qx \qG + \qw,
$
where $\qy$ is the complex baseband vector of received symbols with dimensions $1 \times N$, $\qz$ is the $1 \times N$ vector of modulated quadrature amplitude modulated (QAM)\footnote{We consider QAM as it provides the best performance but the proposed design could be implemented with other modulations as well.} symbols with $2^b$ symbols, $b$ being the number of bits per symbol, and $\qS$ is the $N \times N$ space-time rotation matrix with combining factor $s \leq N$; $\qS$ is used to combine the symbols in $\qz$ onto the $1 \times N$ vector $\qx$, without affecting the overall energy, i.e $\E\{\qz^\dag \qz\} = \E\{\qx^\dag \qx\}$. The goal of the rotation is to achieve higher diversity orders at the output of the demodulator, which will be explained in the sequel. As the FA receives at a given port and at a given symbol time period, the $N \times N$ channel matrix $\qG$ is diagonal with entries $g_1, \dots, g_N$. Finally, the $1 \times N$ vector $\qw$ consists of circularly-symmetric complex AWGN components with zero mean and variance $\nu^2$.

Based on the above, the digital transmission occurs as follows. At the transmitter, $\mathcal{K}$ information bits are fed to a binary encoder producing a codeword $c \in \mathcal{C} \left(\mathcal{K},\mathcal{L}, d_H \right)$, where $\mathcal{C}$ is the codewords ensemble and $d_H$ is the minimum Hamming distance of the code; the coding rate is hence $R_c = \mathcal{K}/\mathcal{L}$. We will consider trellis-based codes, although the analysis in the sequel applies to any type of binary linear code. The $\mathcal{L}$ coded bits are then interleaved and fed by groups of $b$ bits to the QAM modulator. Modulated symbols are then combined by groups of $s$ through a space-time rotation before being transmitted. At the receiver, an iterative soft-input soft-output (SISO) demodulator/decoder is implemented that consists of the exchange of extrinsic information about coded bits before deciding on the {\em a posteriori} probabilities (APP), as shown in Fig. \ref{fig3}. Specifically, the APP QAM SISO demodulator computes the extrinsic probabilities $\xi(c_i)$ given the channel likelihoods and the {\em a priori} probabilities $\pi(c_i)$ fed back from the SISO decoder as \cite{gresset08}
\begin{align}\label{ext}
\xi(c_i) = \frac{\sum_{\qx' \in \mathcal{X}(c_i = 1)} \left[\exp(-\frac{|| {\bf y}' - {\bf x}' {\bf G}||^2}{\nu^2}) \prod_{k \neq i} \pi(c_k) \right] }{\sum_{\qx \in \mathcal{X}} \left[\exp(-\frac{||{\bf y}' - {\bf x} {\bf G}||^2}{\nu^2}) \prod_{k \neq i} \pi(c_k) \right]},
\end{align}
where $c_i$ is the $i$-th bit of codeword $c$ and $\mathcal{X}$ has cardinality $2^{sb}$. In fact, due to the diagonal structure of the FA channel matrix {\bf G}, only symbols that are space-time rotated are jointly demodulated, while unrotated symbols are demodulated individually. This considerably reduces the demodulation complexity as compared to conventional multiple-antenna demodulation, in which the cardinality of the received vector is $2^{sb N_t}$, $N_t$ being the number of transmit antennas. Across the iterations, the probabilities on the coded bits computed at both the SISO demodulator and decoder become more reliable, and near-maximum likelihood (ML) performance is achieved under ideal convergence. In the sequel, we consider that the {\em a priori} probabilities $\pi(c_k)$ fed back from the SISO decoder are perfect (the {\em genie} condition \cite{gresset08}), i.e. $\pi(c_k) \in \{0,1\}$, so that the distribution at the output of the demodulator only depends on the channel likelihood, i.e. $\exp(-||\qy'-\qx'\qG||^2/\nu^2)$.

\begin{figure}[t]\centering
\includegraphics[width=0.8\linewidth]{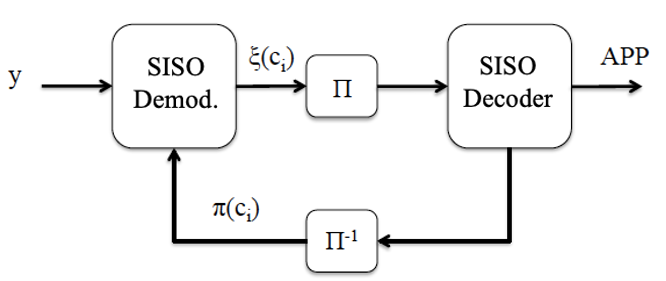}
\caption{Iterative APP demodulation and decoding receiver; $\Pi$ and $\Pi^{-1}$ represent interleaving and deinterleaving, respectively.}\label{fig3}
\end{figure}

\subsection{Diversity Order Bounds}\label{bounds-sec}
The block-fading channel with one conventional antenna at the transmitter and one FA at the receiver can be seen as a correlated block-fading channel. However, as described in Appendix \ref{thm1_prf}, by conditioning on $h_1 = |g_1|^2$, the squared magnitudes $h_n = |g_n|^2, n>1$ are independent non-central chi-square distributed random variables with two degrees of freedom.

In order to ensure code diversity and optimal coding gain, we assume that an ideal interleaver is used \cite{gresset08}, i.e., for any pair of codewords $(c,c')$, the $\omega$ non-zero bits of $c \oplus c'$ are transmitted in different fading blocks. In the iterative receiver of Fig. \ref{fig3}, the maximum channel diversity of the FA channel (i.e. equal to $N$) can be collected at the output of the demodulator, but this implies that the $N$ transmitted symbols should be space-time rotated, thus resulting in a complexity in \eqref{ext} that increases as $2^{sb}$. Thus, the lowest complexity solution for achieving full diversity $N$ would be to first collect the maximum diversity through the SISO decoder and then recover the remaining diversity at the demodulator.

\begin{proposition}\label{prop-div}
The diversity order $G_d$ of coded modulations transmitted over an $N$-port FA block-fading channel under ideal interleaving is upper-bounded as
\begin{align}\label{diversity}
G_d \leq \min \left\{\bar{s} \left\lfloor \frac{N}{\bar{s}}(1-R_c) + 1\right\rfloor; N; \lfloor \bar{s} d_H\rfloor\right\},
\end{align}
where $\bar{s}$ is the average number of rotated modulated symbols within the length-$N$ vector and the first term of the {\em min} function represents the Singleton bound on diversity.
\end{proposition}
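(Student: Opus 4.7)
My plan is to upper-bound $G_d$ through a pairwise error probability (PEP) analysis on the diagonal FA channel, combined with a Singleton-type counting argument, so that each of the three terms in \eqref{diversity} emerges as a distinct geometric obstruction on the difference $\qx-\qx'$ between two transmitted codewords $c,c'$. The asymptotic fact I would exploit is that, because $\qG = \text{diag}(g_1,\dots,g_N)$ with non-degenerate fading entries (and with the channels in different ports being independent after conditioning on $g_1$, as established in Appendix \ref{thm1_prf}), the PEP decays as $\ee^{-r}$, where $r$ is the number of coordinates in which $\qx$ and $\qx'$ differ. Hence any upper bound on $r$ transfers directly to $G_d$.

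The bound $G_d \leq N$ follows at once from $r \leq N$, producing the second term of the $\min$. For the $\lfloor \bar{s} d_H\rfloor$ bound I would use the ideal-interleaving hypothesis: the $d_H$ bit positions in which $c$ and $c'$ differ lie in $d_H$ distinct modulated symbols of $\qz$, and the space-time rotation $\qS$ spreads each differing symbol across the $\bar{s}$ coordinates of $\qx$ that belong to its rotation group. In the most favorable arrangement all $d_H$ differing symbols fall in distinct rotation groups, so $r = \bar{s} d_H$; if rotation groups are shared, $r$ is strictly smaller, so $r \leq \lfloor \bar{s} d_H \rfloor$ in general, which combined with $r\leq N$ yields the second and third terms.

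For the Singleton-type bound I would regroup the $N$ coordinates of $\qx$ into $N/\bar{s}$ super-blocks of $\bar{s}$ coordinates each, so that the channel is equivalent to a block-fading channel with $N/\bar{s}$ independent super-blocks. The classical Singleton bound for coded block-fading channels then caps the super-block diversity at $\lfloor (N/\bar{s})(1-R_c)+1\rfloor$, obtained by combining $d_H \leq (N/\bar{s})-(N/\bar{s})R_c+1$ from the ordinary Singleton bound applied to the super-symbol alphabet with the fact that diversity equals the number of non-erased super-blocks. Since any codeword difference confined to a single super-block contributes $\bar{s}$ non-zero coordinates to $\qx-\qx'$, multiplying by $\bar{s}$ gives the first term of the $\min$.

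The main obstacle I foresee is the rigorous handling of the \emph{average} spreading factor $\bar{s}$: since not all modulated symbols are rotated and the rotation groups may have different cardinalities, the ``super-block'' partition underlying the Singleton step is not literal, and one must verify that replacing the actual rotation sizes by their average leaves the upper bound intact. I expect this to follow from a best-case/worst-case sandwich on the number of non-zero coordinates of $\qx-\qx'$ across all rotation-size assignments consistent with the prescribed $\bar{s}$, together with careful accounting of the floor operations when $N/\bar{s}$ or $\bar{s} d_H$ is non-integer.
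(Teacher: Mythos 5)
Your proposal is correct and follows essentially the same route as the paper: both arguments group the $s$ jointly rotated coordinates into super-blocks (with the unrotated symbols as singleton blocks), reduce the system to a block-fading channel with $\lfloor N/\bar{s}\rfloor$ blocks, apply the modified Singleton bound to the induced length-$\mathcal{B}$ nonbinary code to get the first term, and obtain the $\lfloor \bar{s} d_H\rfloor$ term by noting that the $d_H$ minimum-weight bit positions can see at most that many fading realizations under ideal interleaving. The paper phrases the per-block diversity in terms of the degrees of freedom of the (conditionally) non-central chi-square laws at the demodulator output rather than your count $r$ of differing coordinates of $\qx-\qx'$, but this is the same bookkeeping.
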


\begin{proof}
	See Appendix \ref{proof_div}.
\end{proof}

\begin{figure}[t]\centering
\includegraphics[width=0.8\linewidth]{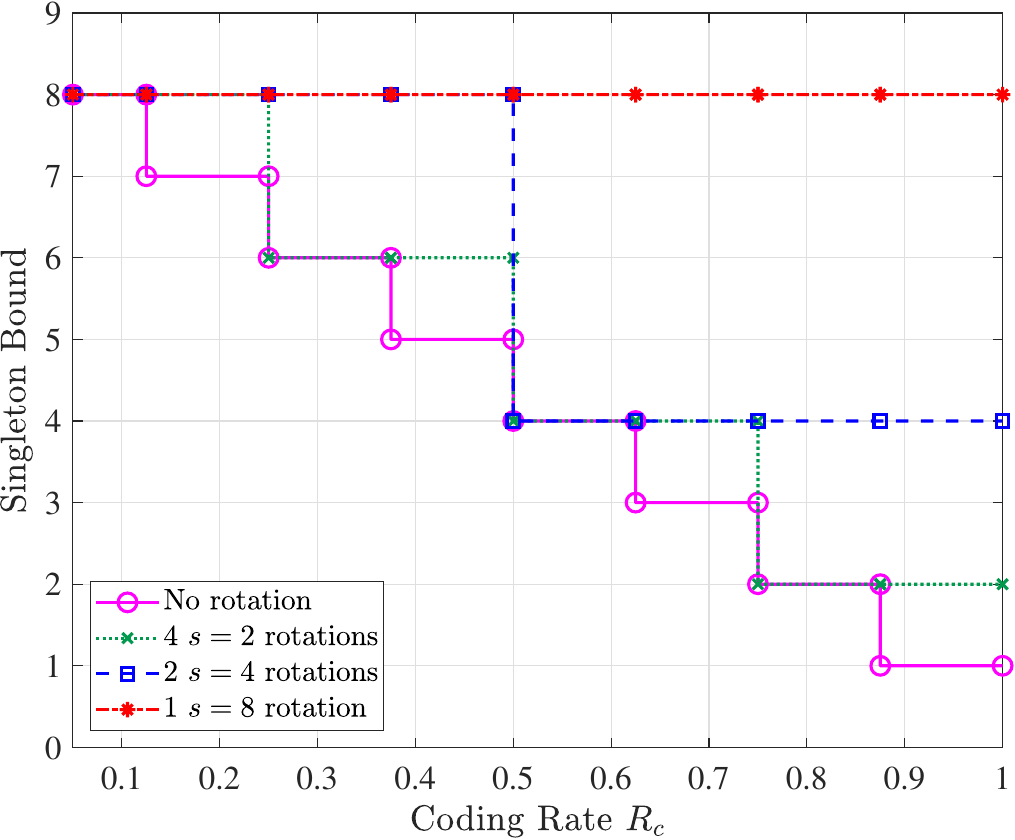}
\caption{Singleton bound on the diversity versus coding rate for an $8$-port FA block-fading channel.}\label{fig4}
\end{figure}

The bound in \eqref{diversity} provides, given a coding rate $R_c$ and maximum diversity $N$, the minimum value of $\bar{s}$ that allows the receiver to collect the maximum diversity at a minimum complexity. Fig. \ref{fig4} shows the achievable diversity of an $8$-port FA for different values of $\bar{s}$. It is shown that the larger the $\bar{s}$, the larger the coding rate that could achieve maximum diversity, however the larger the demodulator complexity will become. In case no rotation is used, the maximum rate that achieves maximum diversity is $R_c = 1/N$, and if all the symbols are rotated (i.e. $s=N$), diversity is attained with uncoded transmission (i.e. $1~s=8$ rotation). Other rotation configurations allow to achieve full diversity for $1/N < R_c \leq 1$. Examples of such rotations are given in Appendix \ref{app}.

\subsection{Average WER of Coded Modulations}\label{awer}
We now provide a tight upper bound on the average WER of space-time rotated FA block-fading channels. The WER refers to the probability that a transmitted codeword is decoded erroneously, i.e. when at least one bit is erroneous after decoding. Hence, it is a more suitable performance metric for the proposed coded modulation design than the outage probability. A general upper bound on the average WER of a linear code is derived based on the union bound as \cite{PROAKIS}
\begin{equation}\label{pew}
P_{\rm WER} \leq \sum_{c \in \mathcal{C}\backslash\{0\}} \PP\{0 \to c\},
\end{equation}
where $\PP\{0 \to c\}$ denotes the PEP, that is, the probability of decoding codeword $c$ when the all-zero codeword $\{0\}$ is transmitted. Under ideal interleaving, the bits of an error event are placed on different channel realizations to achieve code diversity, as explained above. By assuming the all-zero codeword is transmitted, we have an error if the decoder gives codeword $c$ with Hamming weight $w(c)>0$. We now set $w_k(c)$ as the partial Hamming weight that is transmitted on fading block $k$, with $\sum_{k=1}^\mathcal{B} w_k(c) = w(c)$, where $\mathcal{B}$ is the total number of fading blocks \cite{boutros04}. To achieve diversity, if $d_H \geq \mathcal{B}$ we should have $w_k(c) > 0,$ $\forall k$, and the $w(c)$ non-zero bits should be equally distributed over the $k$ fading blocks to ensure high coding gains; these two conditions are ensured by the ideal interleavers. If no rotation is used, the fading blocks are represented by the $g_k$ coefficients (i.e. $\mathcal{B}=N$). If a rotation is used, a block is defined as a group of $s$ rotated channel coefficients.

\begin{proposition}\label{prop-pep}
The conditional PEP under ML decoding of space-time rotated $N$-port FA block-fading channels with ideal interleaving, codeword Hamming weight $w(c) = i$, $2^b$-QAM modulation, and Gray mapping is approximated by \eqref{approx1},
\begin{figure*}[t!]\begin{equation}\label{approx1}
\PP\{0 \to c \,|\, \qS, \qG\} \approx \delta Q \left(\sqrt{R_c \ee \left( \zeta \sum_{j=1}^{\psi/s} w_j(c) \sum_{k=1}^s \kappa_k h_{(j-1)s + k} + \zeta' \sum_{j=\psi + 1}^N w_j(c) h_j \right)}\right).
\end{equation}\noindent\makebox[\linewidth]{\rule{18.3cm}{.4pt}}\end{figure*}
where $\ee$ is the SNR per bit, $\delta = [4/(\bar{s}b)] (\sqrt{2^{\bar{s}b}} - 1)/\sqrt{2^{\bar{s}b}}$, $\zeta = 3b/(2^b - 1)$, $\zeta' = 3sb/(2^{sb} - 1)$, $\psi$ denotes the number of symbols out of the $N$ that are space-time rotated, and the $\kappa_k$ are functions of the space-time rotation matrix entries.
\end{proposition}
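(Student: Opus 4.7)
The plan is to specialize the standard pairwise-error-probability analysis for coherent ML decoding to the FA channel model $\qy = \qz\qS\qG + \qw$ of Section~\ref{sec_coded}. Under the genie condition invoked earlier (so that the APP demodulator output depends only on the channel likelihood), the conditional PEP between the all-zero codeword and a competing codeword $c$ takes the canonical form
\[
\PP\{0 \to c \,|\, \qS,\qG\} = Q\!\left(\frac{\|(\qx-\qx')\qG\|}{\sqrt{2}\,\nu}\right),
\]
where $\qx = \qz\qS$ and $\qx' = \qz'\qS$ are the transmit vectors produced by the two codewords. The goal is to rewrite the argument of the $Q$-function as the linear form in the $h_n$ stated in the proposition, and to absorb the Gray-mapped QAM multiplicity into the prefactor $\delta$.

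First I would exploit the diagonal structure $\qG = \text{diag}(g_1,\dots,g_N)$ together with the block structure of $\qS$ to decompose the squared Euclidean distance into independent contributions coming from the $\psi/s$ space-time-rotated blocks and from the remaining unrotated positions. On the $k$-th slot of the $j$-th rotated block, the component of $\qx-\qx'$ is the linear combination $\sum_l S_{l,k}(z_l-z'_l)$ of the underlying QAM symbol differences; for a nearest-neighbor error event in the induced $s$-dimensional lattice, the channel-dependent contribution reduces to $\kappa_k h_{(j-1)s+k}$ with $\kappa_k$ determined by the squared magnitudes of the corresponding column of $\qS$. On an unrotated position $j$, the component is simply $z_j-z'_j$, so a Gray-mapped nearest-neighbor QAM error contributes $d_{\min}^2 h_j$. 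Because the ideal interleaver places the $w(c)$ non-zero bits of $c$ on distinct fading blocks with multiplicities $w_j(c)$, the overall squared distance becomes linear in the $h_n$ with coefficients $w_j(c)\kappa_k$ on the rotated part and $w_j(c) d_{\min}^2$ on the unrotated part. Substituting the minimum-distance expressions for $2^b$-QAM and for the $2^{sb}$-point rotated lattice, and converting symbol energy to information-bit SNR through $E_b/N_0 = R_c\ee$, produces the quadratic form inside the square root and fixes the constants $\zeta = 3b/(2^b-1)$ and $\zeta' = 3sb/(2^{sb}-1)$ in their respective summations.

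Finally, the overall multiplicative factor $\delta$ comes from the standard nearest-neighbor/Gray-mapping bit-error-probability approximation $(4/q)(\sqrt{2^q}-1)/\sqrt{2^q}$ for a $2^q$-QAM-like constellation. Since the transmission mixes rotated blocks (whose effective constellation has dimension $sb$) with unrotated QAM symbols (of dimension $b$), the dominant number of nearest neighbors per coded bit is obtained by averaging across the $N$ slots of the transmit vector, which introduces precisely the average spreading factor $\bar{s}$ and yields $\delta = (4/(\bar{s}b))(\sqrt{2^{\bar{s}b}}-1)/\sqrt{2^{\bar{s}b}}$.

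The main obstacle I expect is the careful identification of the $\kappa_k$ coefficients: one has to show that the minimum-Euclidean-distance error event on the $s$-dimensional rotated constellation, after passing through $\text{diag}(g_{(j-1)s+1},\dots,g_{js})$, factors cleanly into a weighted sum $\sum_{k=1}^s \kappa_k h_{(j-1)s+k}$ whose weights depend only on $\qS$ and not on the specific error event, so that the averaging over error patterns leaves a clean linear form. A secondary subtlety is justifying that it is the \emph{average} spreading factor $\bar{s}$, rather than separate factors for each mode, that enters $\delta$; this requires weighting the Gray-mapping nearest-neighbor count by the fraction of rotated versus unrotated positions, and relies on the dominant-error-event heuristic underlying the entire approximation.
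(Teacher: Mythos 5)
Your proposal follows essentially the same route as the paper's proof: start from the standard Gray-mapped QAM PEP approximation (the paper cites \cite{boutros04} for the unrotated case), treat each rotated block as an effective $2^{sb}$-point constellation whose channel gain collapses to $\sum_{k=1}^{s}\kappa_k h_{(j-1)s+k}$ via the diagonal structure of $\qG$, and then combine the rotated and unrotated contributions using the interleaver-induced weights $w_j(c)$ and the averaged prefactor $\delta$. The two subtleties you flag (the error-event independence of the $\kappa_k$ and the appearance of $\bar{s}$ in $\delta$) are precisely the heuristic steps the paper also takes without further justification; the only discrepancy worth noting is that your assignment of $\zeta$ to the unrotated positions and $\zeta'$ to the rotated blocks matches the proof's intermediate equations, whereas the proposition's final display places them the other way around.
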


\begin{proof}
See Appendix \ref{proof-pep}.
\end{proof}

The argument of the $Q$-function in \eqref{approx1} shows that space-time rotations, apart from achieving higher diversity orders with higher coding rates, increase the coding gain. As an example, consider a code $\mathcal{C}$ with $d_H=5$ transmitted on a $2$-port FA block-fading channel. Without rotation, the interleaver gives $w_1(c)=2$ that multiplies $h_1$ and $w_2(c)=3$ that multiplies $h_2$. By using a rotation with $s=2$, we get $w_1(c)=d_H=5$ that multiplies $\sum_{\ell=1}^s h_\ell$, which increases the argument of the $Q$-function, thus decreases the PEP.

\begin{corollary}\label{cor3}
For high SNR, we obtain \eqref{asymp-pep},
\begin{figure*}[t!]\begin{equation}\label{asymp-pep}
\lim_{\ee\to\infty} \PP\{0 \to c\} \leq \left(\frac{1}{R_c \ee}\right)^N \frac{\delta\zeta^{-\psi}(\zeta')^{\psi-N}}{2 w_1(c) \kappa_1 \prod_{j=\psi+1}^N (1-\rho_j^2) w_j(c) \prod_{j=1}^{\psi/s} \prod_{k=1}^s(1-\rho_{(j-1)s + k}^2) w_j(c) \kappa_k}.
\end{equation}\noindent\makebox[\linewidth]{\rule{18.3cm}{.4pt}}\end{figure*}
for $(j-1)s + k \neq 1$.
\end{corollary}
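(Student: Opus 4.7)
The plan is to bound the average PEP by taking the expectation of the conditional PEP in \eqref{approx1} over the joint distribution of $h_1,\dots,h_N$ and then isolating the leading-order term as $\ee\to\infty$. First I would apply the Chernoff-type bound $Q(x)\leq\tfrac{1}{2}\exp(-x^2/2)$ to \eqref{approx1}, which converts the conditional PEP into an exponential in a weighted linear combination of the channel gains. The problem then reduces to evaluating the joint moment generating function of the correlated Rayleigh squared envelopes at arguments $s_k$ proportional to $R_c\ee$ times the appropriate weight, namely $\zeta w_j(c)\kappa_k$ for symbols within a rotated block and $\zeta' w_j(c)$ for unrotated symbols.

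To evaluate this joint MGF I would decouple the correlation by conditioning on $h_1$, exactly as in the derivation of Theorem \ref{thm1} and Proposition \ref{prop2}. For $m=1$, equations \eqref{rv1}--\eqref{rv2} show that, given $h_1$, each $h_n$ with $n\ge 2$ is an independent non-central chi-square random variable with two degrees of freedom, scale $1-\rho_n^2$ and non-centrality proportional to $\rho_n^2 h_1/(1-\rho_n^2)$, so the conditional MGFs factorize and admit the closed form $[1+s_n(1-\rho_n^2)]^{-1}\exp(-s_n\rho_n^2 h_1/[1+s_n(1-\rho_n^2)])$. Multiplying across $n\ge 2$ and integrating over the exponentially distributed $h_1$ yields the joint MGF explicitly. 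Taking $\ee\to\infty$, each rational prefactor behaves as $[s_n(1-\rho_n^2)]^{-1}$ while the non-centrality contribution inside the $h_1$-integral saturates at the finite constant $\rho_n^2/(1-\rho_n^2)$ and is thus dominated by the linear term $s_1 h_1$, producing an additional $1/s_1$ factor. Collecting the $N$ reciprocals $\prod_n 1/s_n$ delivers the $(R_c\ee)^{-N}$ scaling together with the powers $\zeta^\psi(\zeta')^{N-\psi}$, the products of $w_j(c)$ and $\kappa_k$ from the rotated and unrotated blocks, and the $(1-\rho_n^2)$ factors for every $n\ge 2$; the excluded index $(j-1)s+k=1$ is the reference port, which contributes the separate factor $w_1(c)\kappa_1$ with no $(1-\rho_1^2)$ since $h_1$ was the conditioning variable.

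The main obstacle is the combinatorial bookkeeping: correctly tracking the multiplicities with which the code weights $w_j(c)$ and the rotation coefficients $\kappa_k$ appear in each rotated and unrotated block, isolating the $h_1$ contribution so that it carries no $(1-\rho_1^2)$ factor, and confirming that the constants $\rho_n^2/(1-\rho_n^2)$ arising in the $h_1$-integrand are indeed subdominant with respect to $s_1\to\infty$, so that only the polynomial prefactors survive to give the stated high-SNR form with the claimed coefficient.
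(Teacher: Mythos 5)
Your proposal is correct and follows essentially the same route as the paper's proof: the Chernoff bound $Q(x)\leq\tfrac{1}{2}\exp(-x^2/2)$, conditioning on $h_1$ so that the $h_n$, $n\geq 2$, become independent non-central chi-square variables whose conditional MGFs factorize, taking the high-SNR limit of the rational prefactors and saturated non-centrality exponentials, and finally averaging over the exponential $h_1$ to produce the remaining $(R_c\ee\,\zeta\, w_1(c)\kappa_1)^{-1}$ factor. The only (immaterial) difference is that you integrate out $h_1$ before passing to the limit, whereas the paper takes the conditional high-SNR limit first.
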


\begin{proof}
See Appendix \ref{coro-pep-proof}.
\end{proof}

The asymptotic expression of the PEP of \eqref{asymp-pep} shows that maximum coded diversity of order $N$ is attained, as the PEP is inversely proportional to $\ee^N$.

\begin{proposition}
A tight upper bound on the average WER of space-time rotated $N$-port FA block-fading channels is given by
\begin{align}
P_{\rm WER} \leq 1 - \int_\qG &[ 1 - \min\{1, W_i \PP\{0 \to c \,|\, \qS , \qG\}\}]^{\mathcal{L}} p(\qG) d\qG,
\end{align}
where $p(\qG)$ denotes the distribution of $\qG$, and $W_i$ denotes the number of codewords with Hamming weight $w(c)=i$.
\end{proposition}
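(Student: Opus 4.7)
The plan is to derive the bound in three stages: a conditional union bound on the WER grouped by codeword Hamming weight; a reduction that extracts the $\mathcal{L}$-th power via a first-error-event argument for trellis-based codes; and finally an average against the channel law $p(\qG)$.

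First, I would condition on the space-time rotation $\qS$ and the diagonal channel $\qG$. Linearity of the code lets me take the all-zero codeword as the transmitted sequence, so the union bound \eqref{pew} gives $P_{\rm WER\,|\,\qS,\qG} \leq \sum_{c\neq 0} \PP\{0\to c\,|\,\qS,\qG\}$. Under the ideal-interleaver hypothesis used in Proposition \ref{prop-pep}, the conditional PEP depends on $c$ only through its Hamming weight, so the sum can be regrouped as $\sum_{i\geq d_H} W_i \,\PP\{0\to c\,|\,\qS,\qG\}$ with $c$ now understood as a representative weight-$i$ codeword. Isolating the dominant weight (typically $i = d_H$) gives the single-term bound $W_i \,\PP\{0\to c\,|\,\qS,\qG\}$, and truncating at unity yields $\min\{1, W_i\, \PP\{0\to c\,|\,\qS,\qG\}\}$ — still an upper bound, since $p\mapsto \min\{1,p\}$ is monotone and dominates any probability.

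Second, to extract the $\mathcal{L}$-th power, I would invoke the classical first-error-event argument for trellis codes \cite{PROAKIS}. An error event can commence at any of the $\mathcal{L}$ positions of the trellis; treating these starting positions as independent Bernoulli trials, each with event probability upper-bounded by the quantity just obtained, the probability that \emph{no} error event occurs along the whole codeword is at least $[1-\min\{1, W_i \,\PP\{0\to c\,|\,\qS,\qG\}\}]^{\mathcal{L}}$. Complementing gives $P_{\rm WER\,|\,\qS,\qG} \leq 1 - [1-\min\{1, W_i\, \PP\{0\to c\,|\,\qS,\qG\}\}]^{\mathcal{L}}$, and the stated bound then follows by integrating against $p(\qG)$; the exchange with the integral is harmless, as the integrand lies in $[0,1]$ and $p(\qG)$ is a bona fide density.

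The main obstacle will be the second step: strictly, error events starting at adjacent trellis positions are correlated through shared path history, so the Bernoulli independence assumption is an approximation rather than an equality. The justification I would rely on is that the ideal interleaver — already invoked in the derivation of Proposition \ref{prop-pep} — decorrelates the fading observed along successive error events, making the approximation tight enough that the overall expression remains a valid upper bound, which is precisely the sense of ``tight upper bound'' in the statement. The remaining ingredients — monotonicity of $x \mapsto 1-(1-x)^{\mathcal{L}}$ on $[0,1]$ to preserve truncation under subsequent operations, dominated convergence for the $\qG$-average, and the fact that $W_i = 0$ for $i < d_H$ — are all routine and do not require further work.
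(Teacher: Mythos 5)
Your proposal follows essentially the same route as the paper: the paper's entire proof consists of substituting the conditional PEP of \eqref{approx1} into the word-error-rate bound of Malkamaki and Leib \cite{malka99}, and what you have written is, in effect, a reconstruction of that cited bound (truncated union bound conditioned on $\qG$, the $[1-\cdot]^{\mathcal{L}}$ first-error-event exponentiation, then averaging against $p(\qG)$). Two remarks on your reconstruction. First, your reduction from $\sum_{i\geq d_H} W_i\,\PP\{0\to c_i\,|\,\qS,\qG\}$ to the single term $W_i\,\PP\{0\to c\,|\,\qS,\qG\}$ is justified incorrectly: dropping all but one weight class makes the union bound \emph{smaller}, and the monotonicity of $p\mapsto\min\{1,p\}$ does not rescue this --- $\min\{1,W_{d_H}P(d_H)\}$ need not dominate the first-event error probability. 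In \cite{malka99} the quantity inside the $\min$ is the full weight-enumerator sum, and the single-index $W_i$ in the proposition is best read as shorthand for that sum (consistent with the subsequent asymptotic expression, which does sum over $i$ from $d_H$ to $\infty$). Second, you are right that the $\mathcal{L}$-th-power step is the delicate one; the paper sidesteps it entirely by citing \cite{malka99}, where the $[1-\min\{1,\cdot\}]^{\mathcal{L}}$ form is established, so your honest flagging of the independence approximation is exactly the part of the argument the paper outsources rather than proves.
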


\begin{proof}
The upper bound is obtained by replacing $\PP\{0 \to c \,|\, \qS , \qG\}$ from \eqref{approx1} in the upper bound on the WER expression from \cite{malka99}.
\end{proof}

It is important to point out that the value $W_i$ can be obtained in a straightforward manner for trellis codes, while it might be more difficult to obtain for block codes. Moreover, for high SNR, we can almost surely state that $W_i \lim_{\ee\to\infty} \PP\{0 \to c\} < 1$. Therefore, we can write
\begin{equation}
\lim_{\ee\to\infty} P_{\rm WER} \leq \mathcal{L} \sum_{i=d_H}^\infty W_i \PP\{0 \to c\},
\end{equation}
which results from $(1-x)^a \approx 1-ax$ for $x \approx 0$ and $\PP\{0 \to c\}$ is given by Corollary \ref{cor3}.

\section{Numerical Results}\label{numerical}
We now validate our theoretical analysis with computer simulations. We will start with the performance of the port selection (Section \ref{sec_outdated}). For the sake of presentation, we consider a simple model for $T_n$, i.e. the delay at the $n$-th port, given by
$T_n = \left(\frac{N-n+1}{N}\right) \tau_e\beta$, where $\tau_e$ is the duration for estimating the channel at a port and $\beta$ is a parameter that captures the FA's topological effects, given by $W$, $\pi W$ and $NW/2$ for the linear (length), the circular (circumference) and the wheel-shaped topology (width), respectively. Also, unless otherwise stated, we consider $\theta = 2$ bps, $m = 2$, $f = 100$ Hz \cite{SP}, $\tau_e = 1/(10f)$ s, $\tau_d = 1/(100 f)$ s \cite{SP}, and $l=1$.

\begin{figure}[t]\centering
	\includegraphics[width=0.9\linewidth]{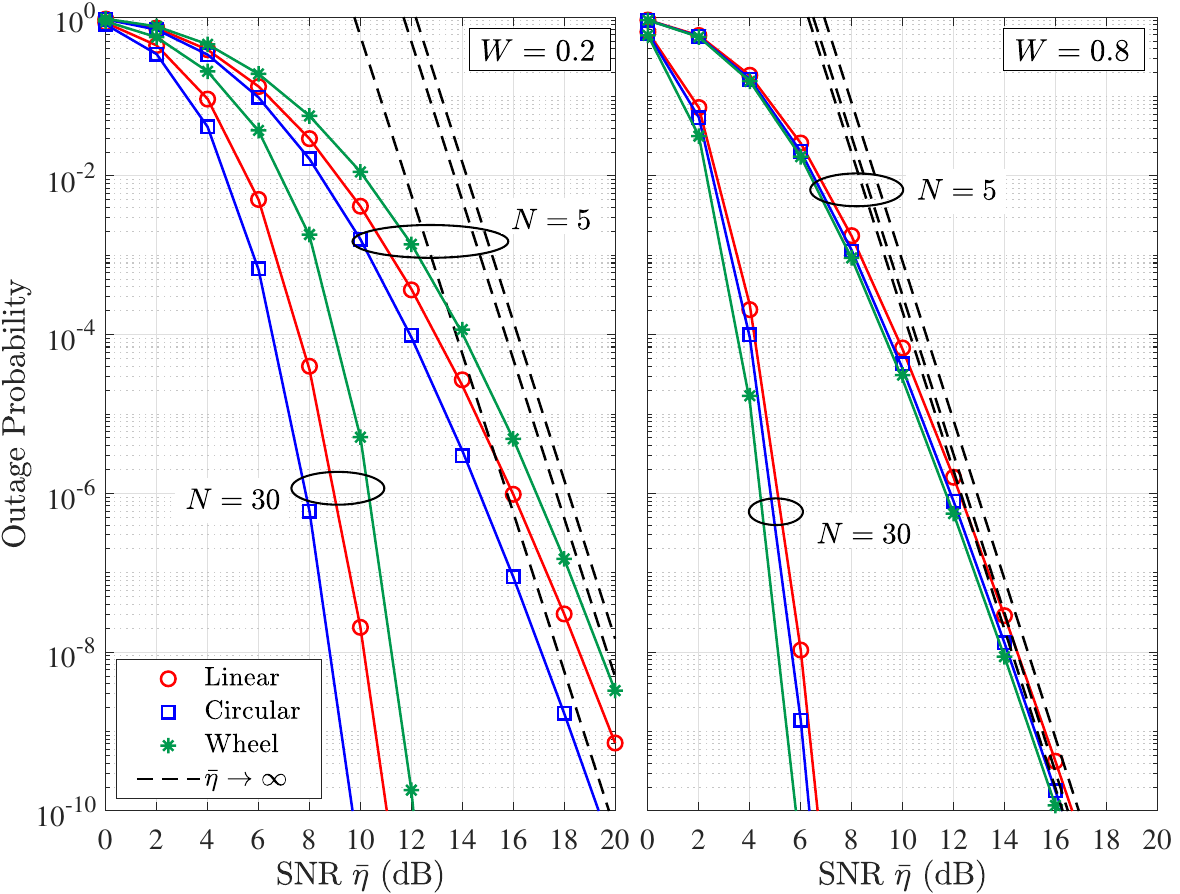}
	\caption{Outage probability of pre-scheduling channels; lines and markers correspond to theoretical and simulation results, respectively.}\label{fig5}
\end{figure}

Fig. \ref{fig5} illustrates the outage probability achieved by the three topologies in terms of the SNR for $N=5, 30$ and $W=0.2,0.8$. As shown in Corollary \ref{cor1}, the FA realizes full diversity gains irrespective of the topology's size and shape. On the other hand, the ``shift'' towards the $y$-axis (i.e. the outage gain), differs due to the fact that it depends on the spatial correlation. Specifically, for a small FA size ($W=0.2$), the spatial correlation is relatively larger. The circular topology outperforms the other two as its circular shape provides longer distances between ports. The wheel topology however has the worst performance, since the correlation between the first and any other port remains equal. The linear architecture achieves a balance between the two. These observations change for a larger FA size ($W=0.8$). Indeed, the wheel topology has the best performance albeit by a small margin. However, as can be seen, this margin increases with $N$. Finally, our analytical expressions (lines) perfectly match the simulation results (markers), which validates our theoretical methodology.

\begin{figure}[t]\centering
\includegraphics[width=0.9\linewidth]{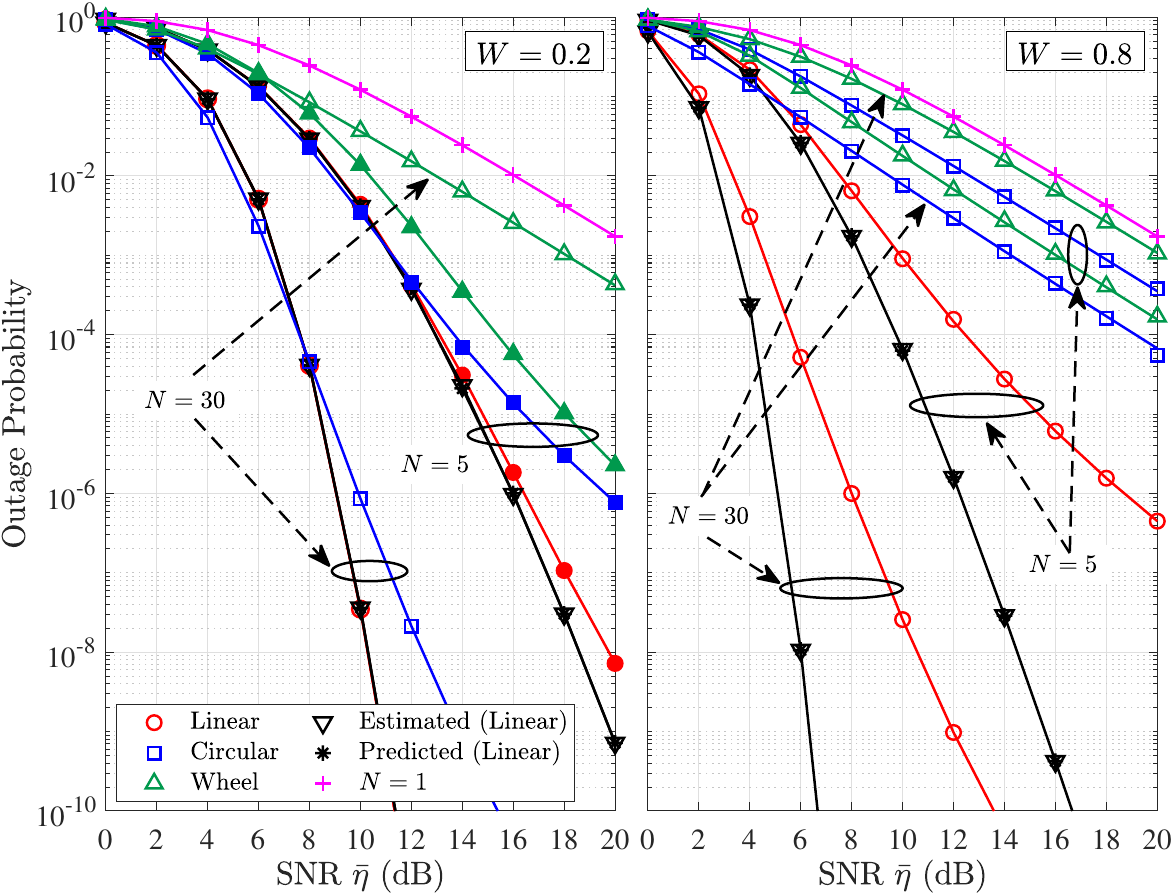}
\caption{Outage probability of post-scheduling and predicted channels; lines and markers correspond to theoretical and simulation results, respectively.}\label{fig6}
\end{figure}

In Fig. \ref{fig6}, the outage probability versus the SNR is depicted for the outdated scenario. For comparison purposes, we also illustrate the performance of the pre-scheduling estimated channels and the case $N=1$. It can be observed that the remarks stated for Fig. \ref{fig5} are no longer valid. Firstly, the channel diversity has been lost and the achieved diversity here is $m=2$. Moreover, the linear architecture generally outperforms the other two in all scenarios as it provides the lowest delays between estimating and activating a port. Essentially, the linear architecture, due to its simplified structure, provides a good balance between spatial correlation and delays. In contrast, the liquid displacement in the other two architectures is subject to longer delays. This also explains the fact that a smaller number of ports achieves a better performance with the wheel-shaped topology. In Fig. \ref{fig6}, we also show the performance achieved by the prediction scheme using the linear topology with $D=4$ and $l=1$. It is clear that the prediction scheme performs as well as the estimated case and thus attains full channel diversity. As before, the analysis (lines) matches the simulations (markers) for the outdated case, which validates the accuracy of our analysis.

\begin{figure}[t]\centering
	\includegraphics[width=0.9\linewidth]{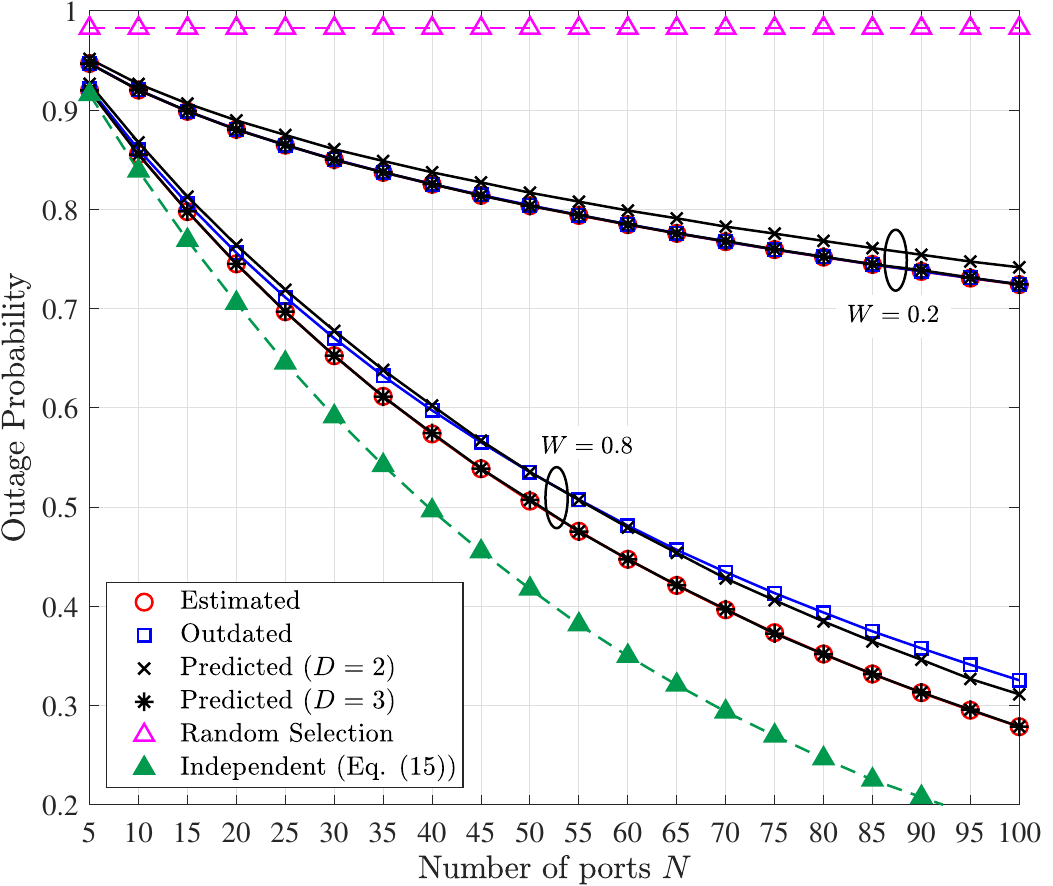}
	\caption{Outage probability versus the number of ports; lines and markers correspond to theoretical and simulation results, respectively; $\ee = 0$ dB.}\label{fig7}
\end{figure}

Fig. \ref{fig7} shows the outage performance with respect to the number of ports for the linear topology. The figure also depicts as benchmark the performance with random selection (upper bound) and the scenario where the channels are independent (lower bound). As expected, as $W$ increases, the performance gets closer to the independent case. Moreover, with $W=0.2$, the outdated channel estimates do not affect the performance. Nevertheless, with $W=0.8$, the performance gap between pre-schediling and post-scheduling channels increases with the number of ports. Finally, the prediction scheme performs very well for both values of $W$. In the case $D=2$, the performance loss is relatively small and it actually outperforms the outdated scenario for large values of $N$. On the other hand, using more resource blocks ($D=3$), attains the best possible performance.

Next, we present the performance of the coded modulation design for FAs (Section \ref{sec_coded}). Throughout the simulations, non-recursive non-systematic convolutional (NRNSC) codes with different constraint length $L$ and free distance $d_{\rm free}$ are used for different coding rates, as shown in Table \ref{table:1} \cite{frenger}. The codeword length is $\mathcal{L}=1024$ bits and optimal interleavers from \cite{gresset08} are implemented. Moreover, SISO decoding of the convolutional code is performed through the ``Forward-Backward'' algorithm \cite{bcjr}. The space-time rotations employed in the simulations are given in Appendix \ref{app} for different sizes and value of the combining factor $s$. Finally, we use as a benchmark the Gaussian input outage probability of the $N$-port FA block-fading channel given as
\begin{align}
&\PP\{\log_2 \det(\qI_N + \ee \qG \qG^\dag) < N\theta\}\nonumber\\
&= \PP\{\det(\qI_N + \ee \qG \qG^\dag) < 2^{N\theta}\},\label{pout}
\end{align}
where $\theta = b R_c$ is the target information rate during one channel use and $\ee$ is the SNR per symbol; note that the outage probability provides the lower bound on the performance of coded transmission.

\begin{figure}[t]
\begin{minipage}{.49\textwidth}\centering
\includegraphics[width=\textwidth]{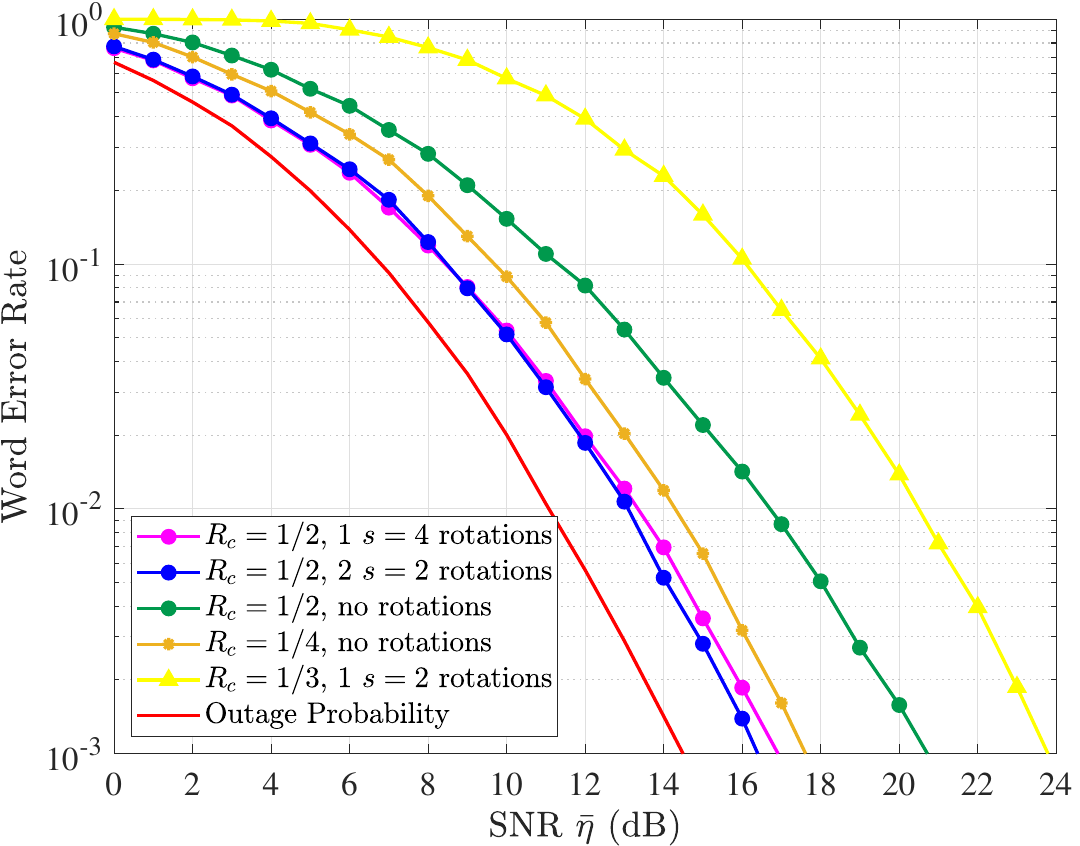}
\caption{WER performance for different coding rates.}\label{fig8}
\end{minipage}\hfill
\begin{minipage}{.49\textwidth}\centering
\includegraphics[width=\textwidth]{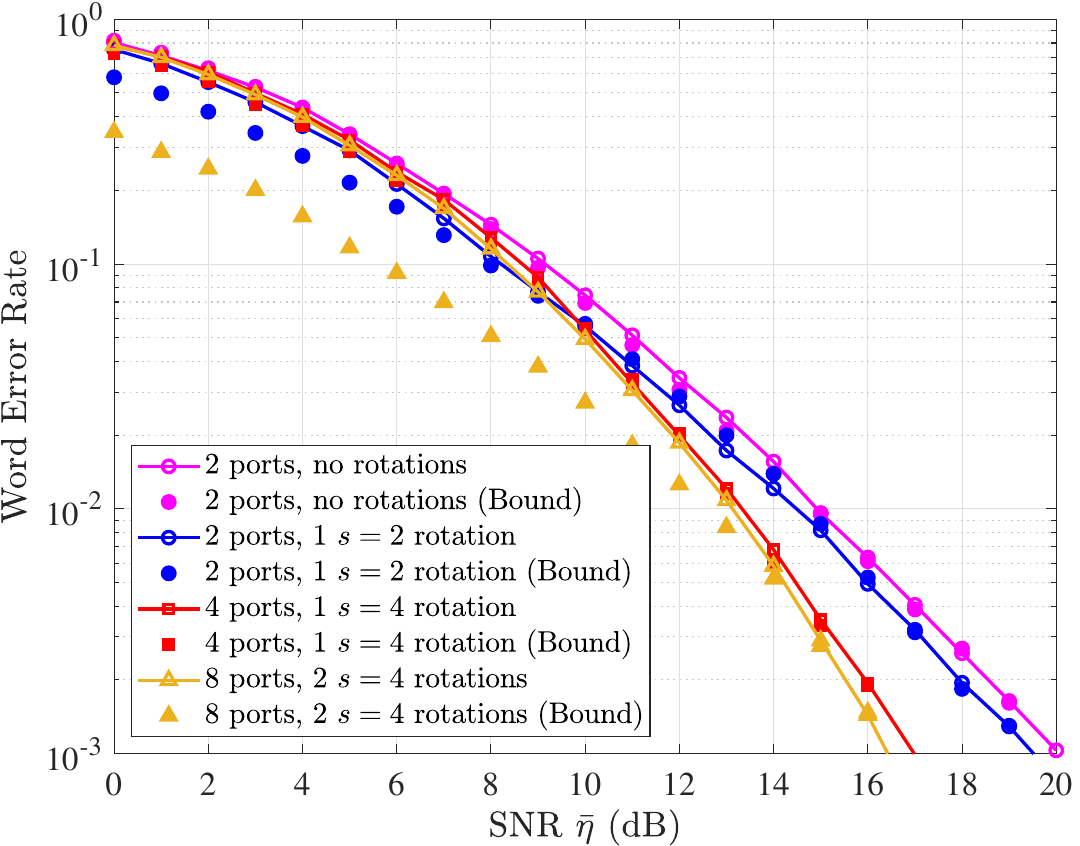}
\caption{WER of the $R_c=1/2$ $(133,171)_8$ NRNSC code, BPSK modulation.}\label{fig9}
\end{minipage}
\end{figure}

\begin{table}[t]
\caption{NRNSC codes used in the simulations.}\centering
\begin{tabular}{|c |c |c | c|}\hline
$L$ & $R_c$ & Generator (base 8) & $d_{\rm free}$\\[0.5ex]\hline\hline
$7$ &1/2 & $(133,171)$ & 10\\\hline
$5$ & 1/3 & $(25,33,37)$ & 12\\\hline
$4$ & 1/4 & $(13,15,15,17)$ & 13\\[0.5ex]\hline
\end{tabular}\label{table:1}
\end{table}

\begin{figure}[t]\centering
\includegraphics[width=0.9\linewidth]{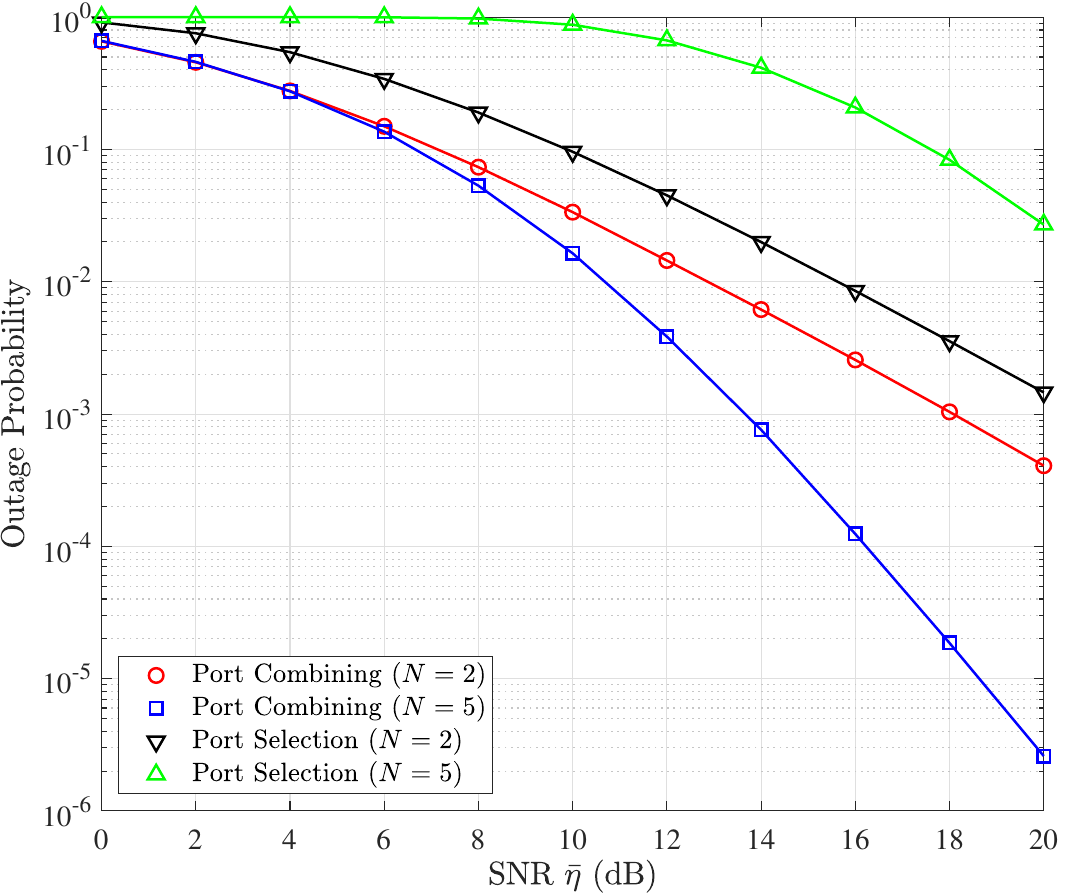}
\caption{Port selection vs port combining.}\label{fig10}
\end{figure}

In Fig. \ref{fig8}, results for a $4$-port receive FA are shown for a transmission rate of $R=1$ bit/s/Hz, which means that a QAM with $b$ bits per symbol is used with a code of rate $R_c = 1/b$. This justifies using codes with increasing free distances (see Table \ref{table:1}) in a goal to compensate for the loss in performance resulting from higher order modulations. As shown in the green curve, the rate-$1/2$ code achieves a diversity $G_d = 3 < N$ without rotation, as given in \eqref{diversity}. When adding two space-time rotations with $s=2$ or one rotation with $s=4$, full diversity is achieved, however a smaller $s$ induces a lower demodulation complexity. On the other hand, the rate-$1/3$ code requires only one rotation with $s=2$ to achieve full diversity, and the rate-$1/4$ code achieves $G_d=N=4$ without rotation. It is worth mentioning that the additional coding gain provided by a space-time rotation comes at the expense of more iterations needed for convergence (up to four iterations between the SISO demodulator and decoder, as compared to one in the absence of a rotation). In Fig. \ref{fig9}, WER results are shown together with the bounds on the WER from sub-section \ref{awer}. When no rotation is employed, the bound is tight for all the SNR range. In the presence of space-time rotations, the bound is tight for high SNR values. Finally, we compare the two schemes in Fig. \ref{fig10}. For a fair comparison, we set the information rate threshold for both schemes to $N\theta$, as in \eqref{pout}, so that it takes into account the total rate of the transmission; note that, in this way, the comparison is only valid between the two schemes for the same spectral efficiency. We can observe that both schemes attain the same diversity for the same number of ports but the combining scheme outperforms the selection significantly, in particular for a larger number of ports.

\section{Conclusions}\label{conclusion}
In this paper, we focused on the diversity and coded modulation design of FA systems. We firstly studied the outage probability of FA systems under general Nakagami fading channels and analytical expressions were provided for the performance with and without post-scheduling errors. It was shown that despite FAs achieving maximum channel diversity, equal to the number of ports, this diversity was dispossessed due to scheduling delays. Therefore, we proposed a linear prediction scheme, which overcomes this limitation and can achieve full diversity. Moreover, we designed space-time coded modulations that attain optimal performance over block-fading channels, by combining space-time rotations with code diversity. We analytically derived the pairwise error probability and provided tight bounds for the WER. The proposed coded modulations achieve maximum diversity and require low-complexity implementation at the receiver. Our result exhibit the potentials of FA in communication systems but also propose solutions for their limitations.

\appendix
\subsection{Proof of Theorem \ref{thm1}}\label{thm1_prf}
By using \eqref{rv1} and \eqref{rv2}, the estimated channels can be written as  $\hat{g}_{1,k} = \hat{x}_{1,k} + \jmath \hat{y}_{1,k}$ and $\hat{g}_{n,k} = \sqrt{1-\rho_n^2} (\hat{x}_{n,k} + \jmath \hat{y}_{n,k}) + \rho_n \hat{g}_{1,k}$. Now, by conditioning on both $\hat{x}_{1,k}$ and $\hat{y}_{1,k}$ and then scaling the variances of $\hat{x}_{n,k}$ and $\hat{y}_{n,k}$ to one, we can write
\begin{align}
\hat{g}_{n,k} = \sqrt{\frac{1-\rho_n^2}{2}} (\tilde{x}_{n,k} + \jmath \tilde{y}_{n,k}),
\end{align}
where $\tilde{x}_{n,k} \,\big|\, \hat{x}_{1,k} \sim \mathcal{N}\left(\sqrt{\frac{2}{1-\rho_n^2}} \rho_n \hat{x}_{1,k}, 1\right)$ and $\tilde{y}_{n,k} \,\big|\, \hat{y}_{1,k} \sim \mathcal{N}\left(\sqrt{\frac{2}{1-\rho_n^2}} \rho_n \hat{y}_{1,k}, 1\right)$. From \eqref{sum}, the estimated $\h_n$ is given by
\begin{align}
\h_n = \sum_{k=1}^m |\hat{g}_{n,k}|^2 &= \frac{1-\rho_n^2}{2}\sum_{k=1}^m (\tilde{x}_{n,k}^2 + \tilde{y}_{n,k}^2)\nonumber\\
&\triangleq \frac{1-\rho_n^2}{2} \tilde{h}_n,
\end{align}
where, due to the condition on $\hat{x}_{1,k}$ and $\hat{y}_{1,k}$, $\tilde{h}_n$ are mutually independent non-central chi-square random variables with $2m$ degrees of freedom and non-centrality parameter $\frac{2z\rho_n^2}{1-\rho_n^2}$ with $z \triangleq \sum_{k=1}^m (\hat{x}_{1,k}^2 + \hat{y}_{1,k}^2)$. Therefore, the CDF of $\h_n$ given $z$, denoted by $\phi_n(z,x)$, is
\begin{align}
\phi_n(z,x) &= F_{\h_n|z}(x|z) = \PP\{\h_n < x | z\}\nonumber\\
&= 1-Q_m\left(\sqrt{\frac{2 z \rho_n^2}{1-\rho_n^2}},\sqrt{\frac{2x}{1-\rho_n^2}}\right).\label{cdf}
\end{align}
Now, since $\h_2, \h_3, \dots, \h_N$ are independent, their joint CDF given $z$ is written as
\begin{align}
F_{\h_2,\h_3,\dots,\h_N|z}(x|z) = \prod_{n=2}^N \phi_n(z,x),
\end{align}
where $\phi_n(z,x)$ is given by \eqref{cdf}. Then, the CDF of the maximum estimated $\h$ is given by
\begin{align}
F_{\h}(x) &= \E_z \left\{F_{\h_2,\h_3,\dots,\h_N|z}(x|z)\right\}\nonumber\\
&= \frac{1}{\Gamma (m)} \int_0^x \exp (-z) z^{m-1} \prod_{n=2}^N \phi_n(z,x) dz,\label{int_form}
\end{align}
which follows from the fact that $z$ is a central chi-square random variable with $2m$ degrees of freedom.

\subsection{Proof of Proposition \ref{pro1}}\label{pro1_prf}
In order to simplify \eqref{cdf1}, we use the following series representation of the Marcum-$Q$ function %\cite{XX}
\begin{align}\label{series}
Q_m(a,b) = 1 - \exp\left(-\frac{a^2}{2}\right) \sum_{k=0}^\infty \frac{\gamma(m+k,b^2/2)}{k!\Gamma(m+k)} \left(\frac{a^2}{2}\right)^k.
\end{align}
Therefore, we can write
\begin{align}
F_{\h}(x) &= \frac{1}{\Gamma (m)} \int_0^x \exp (-z) z^{m-1} \prod_{n=2}^N \exp\left(-\frac{z \rho_n^2}{1-\rho_n^2}\right)\nonumber\\
&\qquad\times \sum_{k=0}^\infty \frac{\gamma\left(m+k,\frac{x}{1-\rho_n^2}\right)}{k!\Gamma(m+k)} \left(\frac{z \rho_n^2}{1-\rho_n^2}\right)^k dz\nonumber\\
&= \frac{1}{\Gamma(m)} \int_0^x \exp\left(-Sz\right) z^{m-1}\nonumber\\
&\qquad\times \prod_{n=2}^N \sum_{k=0}^\infty \frac{\gamma\left(m+k,\frac{x}{1-\rho_n^2}\right)}{k!\Gamma(m+k)} \left(\frac{z \rho_n^2}{1-\rho_n^2}\right)^k dz,
\end{align}
where $S$ has been defined in \eqref{s}. The above expression involves the Cauchy product of $N-1$ power series. Hence, it follows that
\begin{align}
F_{\h}(x) &= \frac{1}{\Gamma (m)} \int_0^x \exp\left(-Sz\right) z^{m-1} \sum_{k=0}^\infty c_k z^k dz\nonumber\\
&= \frac{1}{\Gamma (m)} \sum_{k=0}^\infty c_k \int_0^x \exp\left(-Sz\right) z^{k+m-1} dz,
\end{align}
where the coefficients $c_k$ are given by \eqref{ck}. Finally, the proposition is proven by using the transformation $z \to S/t$ and the fact that $\int_0^b \exp(-t)t^{a-1} dt = \gamma(a,b)$ \cite{GRAD}.

\subsection{Proof of Corollary \ref{cor1}}\label{cor1_prf}
By using \eqref{cdf_series}, we can write the outage probability, defined in \eqref{op}, as
\begin{align}
P_{\rm o}(\theta) = \frac{1}{\Gamma(m)} \sum_{k=0}^\infty \frac{c_k }{S^{m+k}}\gamma\left(m+k,\frac{m}{\ee}(2^\theta-1)S\right).
\end{align}
Thus, for $\ee \to \infty$,
\begin{align}
&\lim_{\ee \to \infty} P_{\rm o}(\theta) \to \frac{1}{\Gamma(m)}\sum_{k=0}^\infty\frac{c_k \left(\frac{m}{\ee}(2^\theta-1)S\right)^{m+k}}{(m+k)S^{m+k}},
\end{align}
which follows from the fact that $\gamma(a,b) \to b^a/a$ for $x \to 0$. 
In this case, the term $k=0$ dominates and so
\begin{align}
\lim_{\ee \to \infty} P_{\rm o}(\theta) \to \frac{1}{\Gamma(m)}\frac{c_0 }{m}\left(\frac{m}{\ee}(2^\theta-1)\right)^m,
\end{align}
with
\begin{align}
c_0 = \prod_{n=2}^N\frac{1}{m\Gamma(m)}\left(\frac{m(2^\theta-1)}{\ee (1-\rho_n^2)}\right)^m.
\end{align}
As $m\Gamma(m) = \Gamma(m+1)$, the result follows after several algebraic manipulations.

\subsection{Proof of Proposition \ref{prop2}}\label{prop2_prf}
We start by deriving the joint CDF $F_{\h_n}(x, E_n)$. Now, the CDF when the first port is selected is given by
\begin{align}
F_{\h_1}(x , E_1) &= \PP\{h_1 < x , h_1 > \max\{h_2,\dots,h_N\}\}\nonumber\\
&= \PP\{h_1 < x , h_1 > h_2, \dots, h_1 > h_N\}.
\end{align}
By fixing $h_1$, the events above are independent and so we can write
\begin{align}
F_{\h_1}(x , E_1) &= \E_{h_1}\left\{\prod_{n=2}^N \phi_n(h_1,h_1)\,\bigg|\, h_1 < x\right\}\nonumber\\
&= \frac{1}{\Gamma(m)} \int_0^x \exp(-z) z^{m-1} \prod_{k=2}^N\phi_k(z,z) dz,\label{ccdf1}
\end{align}
which follows the same approach as in Appendix \ref{thm1_prf}. Now, for the $n$-th port, $1 < n \leq N$, we have
\begin{align}
F_{\h_n}(x , E_n) = \PP\{&h_n < x , h_n > h_1, \dots, h_n > h_{n-1},\nonumber\\& h_n > h_{n+1}, \dots, h_n > h_N\}.
\end{align}
In this case, conditioning on both $h_1$ and $h_n$, we get
\begin{align}
F_{\h_n}(x , E_n) &= \E_{h_1,h_n}\left\{\prod_{\substack{k=2\\k\neq n}}^N\phi_k(h_1,h_n) \Bigg| h_1 < h_n < x \right\}\nonumber\\
&= \int_0^x \int_0^y f_{h_1,h_n}(z,y) \prod_{\substack{k=2\\k\neq n}}^N\phi_k(z,y) dz dy,
\end{align}
where $f_{h_1,h_n}(\cdot,\cdot)$ is the joint PDF of $h_1$ and $h_n$, which can be obtained with Bayes' rule as
\begin{align}
&f_{h_1,h_n}(z,y) = f_{h_n|h_1}(z|y) f_{h_1}(y)\\
&= \frac{1}{\Gamma(m) (1-\rho_n^2)} \exp \left(-\frac{y+z}{1-\rho_n^2}\right) \left(\frac{y z}{\rho_n^2}\right)^{\frac{m-1}{2}}\nonumber\\
&\qquad\times I_{m-1}\left(\frac{2 \sqrt{y z \rho_n^2}}{1-\rho_n^2}\right),
\end{align}
where $f_{h_n|h_1}(z|y)$ is the conditional PDF of a non-central chi-square random variable and $f_{h_1}(y)$ is the PDF of a central chi-square random, both of $2m$ degrees of freedom. Then, we can write
\begin{align}
&F_{\h_n}(x , E_n) = \frac{1}{\Gamma(m) (1-\rho_n^2)} \int_0^x \int_0^y \exp \left(-\frac{y+z}{1-\rho_n^2}\right)\nonumber\\
&\times \left(\frac{y z}{\rho_n^2}\right)^{\frac{m-1}{2}} I_{m-1}\left(\frac{2 \sqrt{y z \rho_n^2}}{1-\rho_n^2}\right) \prod_{\substack{k=2\\k\neq n}}^N\phi_k(z,y) dz dy.\label{ccdf2}
\end{align}
Finally, the proposition is proven by taking the derivative of the CDFs with respect to $x$.

\subsection{Proof of Corollary \ref{cor2}}\label{cor2_prf}
To assist with the simplification of the outage probability, we first apply the transformation $y \to mt/\ee$ to \eqref{cdf3}. In other words, we obtain the CDF in terms of SNRs. Therefore, we can simplify the PDF $f_{\h_n | E_n}(y | E_n)$, $n\neq 1$ (Eq. \eqref{con_pdf2}), as follows
\begin{align}
\lim_{\bar{\eta} \to \infty} f_{t | E_n}(t | E_n) &\overset{(a)}{\approx} \frac{m^2\exp(-\frac{mt}{\ee(1-\rho_n^2)}) (\frac{m}{\ee}t)^{mN-m-1}}{\Gamma (m+1)^N \prod_{k=2}^N (1-\rho_k^2)^m}\nonumber\\
&\quad\times \int_0^{\frac{m}{\ee}t} z^{m-1} \exp(-Sz) dz\nonumber\\
&\overset{(b)}{=}\frac{m^2\exp(-\frac{mt}{\ee(1-\rho_n^2)}) (\frac{m}{\ee}t)^{mN-m-1}}{\Gamma (m+1)^N \prod_{k=2}^N (1-\rho_k^2)^m}\nonumber\\
&\qquad\times\frac{1}{S^m} \gamma\left(m,\frac{mS}{\ee}t\right),
\end{align}
where $(a)$ follows by using \eqref{series} and keeping the sum's first term as well as from the fact $I_m(x) \approx (x/2)^2/\Gamma(m+1)$ for $x \approx 0$; $(b)$ follows from $\int_0^b \exp(-t)t^{a-1} dt = \gamma(a,b)$ \cite{GRAD}. Finally, as $\gamma(a,b) \approx b^a/a$ for $x \approx 0$, we end up with
\begin{align}
\lim_{\bar{\eta} \to \infty} f_{t | E_n}(t | E_n) &\approx \frac{m\exp(-\frac{mt}{\ee(1-\rho_n^2)}) (\frac{m}{\ee}t)^{mN-1}}{\Gamma (m+1)^N \prod_{k=2}^N (1-\rho_k^2)^m}.
\end{align}
The approximation for $n=1$ (Eq. \eqref{con_pdf1}) can be derived in a similar manner. Then, by approximating the Marcum-$Q$ function in \eqref{cdf3} as before and substituting the PDFs, the final expression follows after several algebraic operations.

\subsection{Proof of Proposition \ref{prop-div}}\label{proof_div}
At the SISO demodulator, the interleaving, modulation, and transmission over the FA block-fading channel convert the codewords $c$ and $c'$ onto points $\mathcal{E}$ and $\mathcal{E}'$ in a Euclidean space. For a given channel, performance is governed by the squared Euclidean distance $|\mathcal{E} - \mathcal{E}'|^2$, that is expressed as a sum of $\omega$ squared Euclidean distances associated to the $\omega$ non-zero bits of $c - c'$ \cite{gresset08}. For each $\omega$ squared distance, an equivalent channel model corresponding to the transmission of a BPSK modulation over one among $N$ channel realizations. By assuming $s$ divides $\psi$, we can write
\begin{align}
	|\mathcal{E} - \mathcal{E}'|^2 = \sum_{k=1}^{ \psi/s } f_k^2 + \sum_{j= \psi/s + 1}^{ N - \psi + 1} \ell_j^2,
\end{align}
where $\psi$ denotes the number of symbols out of the $N$ that are space-time rotated, and hence $N - \psi$ are the remaining unrotated symbols. The parameters $\ell_j^2$ are linearly dependent on $h_j$, which means they follow a non-central chi-square distribution with two degrees of freedom, conditioned on $h_1$. The parameters $f_k^2$ are linearly dependent on $||\qH_k ||^2$, where $\qH_k$ is the $s \times s$ matrix defined as
\begin{align}
\qH_k = \qS_k \text{diag}(g_{k+s(k-1)}, \dots, g_{k+sk-1}),
\end{align}
with $\qS_k$ the $s \times s$ sub-part of $\qS$ that rotates $s$ symbols. Hence, the parameters $f_k^2$ follow a non-central chi-square distribution with $2s$ degrees of freedom, conditioned on $h_1$. Although correlation exists between FA ports, the diversity order at the output of the demodulator only depends on the degrees of freedom of the chi-square distributions. The average amount of diversity that can thus be recovered at the output of the SISO demodulator is $\bar{s} = N/( \frac{\psi}{s} + N -\psi)$. At the output of the demodulator, there are $\mathcal{B} = \lfloor \frac{N}{\bar{s}} \rfloor$ non-central chi-square distribution laws on average (representing the $\mathcal{B}$ fading blocks). Now if we group $\mathcal{L}/\mathcal{B}$ bits into one nonbinary symbol, the code $\mathcal{C}$ becomes a length-$\mathcal{B}$ code constructed from an alphabet of size $2^{\mathcal{L}/\mathcal{B}}$. The minimum Hamming distance of the nonbinary code thus represents the diversity order the underlying binary code can achieve over the $N$-port FA block-fading channel. On the other hand, the minimum distance $d^*$ of an error correcting code with dimension $K$ and length $L$ is upper bounded by the Singleton bound as $d^* \leq L - K + 1$. By applying this to the nonbinary code described above, we obtain the modified Singleton bound \cite{malka99,knopp00} on the diversity order as the first term in the {\em min} function of \eqref{diversity}. The term $\left\lfloor \bar{s} d_H \right\rfloor$ is due to the fact that the minimum Hamming distance of the code defines the minimum Hamming weight of any pair of codewords, thus these bits are a limiting diversity factor in that, under ideal interleaving, they can see at most $\left\lfloor \bar{s} d_H \right\rfloor$ fading realizations. To avoid this situation, a code with a large minimum distance is usually selected for transmission.

\subsection{Examples of Space-Time Rotations}\label{app}
Rotations for different values of the average space-time combining factor $\bar{s}$ are now presented. For the case where $\bar{s} = N = 2$, the real $2 \times 2$ cyclotomic rotation is used given by \cite{viterbo_rotations}
\begin{equation}
\qS_1 = \left[\begin{array}{cc}
\cos(\chi) & \sin(\chi)\\
\sin(\chi) & -\cos(\chi)
\end{array}\right],
\end{equation}
with $\chi = 4.15881461$. We now consider the case $N=4$ and $R_c=1/2$. We thus need $\bar{s}=2$ to achieve maximum diversity four, and matrix $\qS_2$ below is used
\begin{equation*}
\qS_2 = \left[\begin{array}{cc}
\qS_1 & 0 \\
0 & \qS_1 \\
\end{array}\right].
\end{equation*}
However, if $R_c=1/3$ for the same channel, $\bar{s}=4/3$ is needed, matrix $\qS_3$ is used
\begin{equation*}
\qS_3 = \left[\begin{array}{cc}
\qS_1 & 0 \\
0 & \qI_2 \\
\end{array}\right].
\end{equation*}
Now if $R_c = 3/4$ and $N = 4$, all the symbols need to be combined as $\bar{s}=4$. In this case, the $4 \times 4$ Kr\"{u}skemper rotation \cite{viterbo_rotations} with normalized minimum product distance of $0.438993$ can be used.

\subsection{Proof of Proposition \ref{prop-pep}}\label{proof-pep}
We first consider the case where no space-time rotation is used, meaning a different modulated symbol is transmitted per FA port. The conditional PEP under ML decoding and Gray mapping, given the fading coefficients, can be approximated as \cite{boutros04}
\begin{equation}\label{pep1}
	\PP\{0 \to c \,|\, \qG\} \approx \delta Q \left(\sqrt{\zeta R_c \ee \sum_{k=1}^N w_k(c) h_k}\right),
\end{equation}
where $\ee$ is the SNR per bit at the receiver, $\delta = (4/b) (\sqrt{2^b} - 1)/\sqrt{2^b}$ and $\zeta = 3b/(2^b - 1)$ for $2^b$-QAM. This approximation, accurate for high SNR, is suitable for the analysis of diversity and coding gain. Next, we assume now that groups of $s$ modulated symbols out of the $N$ are fed to an $N \times N$ space-time rotation $\qS$ ($s$ divides $N$). By assuming the same rotation is used across the $N$ symbols, the power channel gain seen by partial Hamming weight $w_j(c)$ is given by $||\qH_k \qH_k^{\dagger}|| = \sum_{k=1}^s \kappa_k h_{(j-1)s + k}$.  The conditional PEP can now be written as
\begin{align}\label{pep2}
&\PP\{0 \to c \,|\, \qS,\qG\}\nonumber\\
&\approx \delta' Q \left(\sqrt{\zeta' R_c \ee \sum_{j=1}^{ N/s } w_j(c) \sum_{k=1}^s \kappa_k h_{(j-1)s + k}}\right).
\end{align}
Although the constellation resulting from the rotation of $s$ $2^b$-QAM is not exactly an $2^{sb}$-QAM, we assume that $\delta' = [4/(sb)] (\sqrt{2^{sb}} - 1)/\sqrt{2^{sb}}$ and $\zeta' = 3sb/(2^{sb}-1)$, as the rotation does not affect the energy of the modulated symbols. Finally, if we assume that the first $\psi$ symbols out of the $N$ are space-time rotated, and the remaining $N - \psi$ are not, a combination of \eqref{pep1} and \eqref{pep2} gives the expression in \eqref{approx1}.

\subsection{Proof of Corollary \ref{cor3}}\label{coro-pep-proof}
By using $Q(x) \leq \frac{1}{2} \exp(-\frac{x^2}{2})$ and by conditioning on $h_1$, we can upper-bound \eqref{approx1} as
\begin{align}
&\PP\{0 \to c \,|\, \qS, h_1\}\nonumber\\
&\leq \frac{\delta}{2} \E_{h_2, \dots, h_N} \Bigg\{\!\exp\!\Bigg(\!\!\!-\!R_c \ee\Bigg(\zeta\sum_{j=1}^{\psi/s} w_j(c) \sum_{k=1}^s \kappa_k h_{(j-1)s + k}\nonumber\\
&\hspace{4.5cm}+ \zeta' \sum_{j=\psi + 1}^N w_j(c) h_j\Bigg)\Bigg)\!\Bigg\}\nonumber\\
&=\frac{\delta}{2} \E_{h_2, \dots, h_N} \Bigg\{\prod_{j=1}^{\psi/s}\prod_{k=1}^s\exp(-R_c \ee \zeta w_j(c) \kappa_k h_{(j-1)s + k})\nonumber\\
&\hspace{3cm}\times\prod_{j=\psi + 1}^N \exp(-R_c \ee\zeta' w_j(c) h_j)\Bigg\}\nonumber\\
&=\frac{\delta}{2} \exp(-R_c \ee \zeta w_1(c) \kappa_1 h_1)\nonumber\\
&\qquad\times \prod_{j=\psi + 1}^N \frac{\exp\left(-\frac{\rho_j^2 h_1 R_c \ee\zeta' w_j(c)}{1 + (1-\rho_j^2) R_c \ee\zeta' w_j(c)}\right)}{1 + (1-\rho_j^2) R_c \ee\zeta' w_j(c)}\nonumber\\
&\qquad\times\prod_{j=1}^{\psi/s} \prod_{k=1}^s \frac{\exp\left(-\frac{\rho_{(j-1)s + k}^2 h_1 R_c \ee \zeta w_j(c)}{1 + (1-\rho_{(j-1)s + k}^2) R_c \ee \zeta w_j(c)}\right)}{1 + (1-\rho_{(j-1)s + k}^2) R_c \ee \zeta w_j(c)\kappa_k},
\end{align}
which follows from the moment generating function of a non-central chi-square distributed random variable with two degrees of freedom and non-centrality parameter $\frac{2h_1\rho_{(j-1)s + k}^2}{1-\rho_{(j-1)s + k}^2}$ with $(j-1)s + k \neq 1$. Then, for high SNRs, we can write
\begin{align}
&\lim_{\ee\to\infty}\PP\{0 \to c \,|\, \qS, h_1\}\nonumber\\
&\leq \frac{\delta}{2} \exp(-R_c \ee \zeta w_1(c) \kappa_1 h_1) \prod_{j=\psi + 1}^N \frac{\exp\left(-\frac{\rho_j^2 h_1}{1-\rho_j^2}\right)}{(1-\rho_j^2) R_c \ee\zeta' w_j(c)}\nonumber\\
&\qquad\times\prod_{j=1}^{\psi/s} \prod_{k=1}^s \frac{\exp\left(-\frac{\rho_{(j-1)s + k}^2 h_1}{1-\rho_{(j-1)s + k}^2}\right)}{(1-\rho_{(j-1)s + k}^2) R_c \ee \zeta w_j(c) \kappa_k}\nonumber\\
&=\frac{\delta}{2} \exp(-R_c \ee \zeta w_1(c) \kappa_1 h_1)\nonumber\\
&\quad\times \frac{(R_c \ee\zeta')^{\psi-N}\exp\left(-\sum_{j=\psi + 1}^N\frac{\rho_j^2 h_1}{1-\rho_j^2}\right)}{\prod_{j=\psi+1}^N(1-\rho_j^2) w_j(c)}\nonumber\\
&\quad\times \frac{(R_c \ee \zeta)^{1-\psi}\exp\left(-\sum_{j=2}^\psi \frac{\rho_j^2 h_1}{1-\rho_j^2}\right)}{\prod_{j=1}^{\psi/s} \prod_{k=1}^s(1-\rho_{(j-1)s + k}^2) w_j(c) \kappa_k}.
\end{align}
Then, by averaging out $h_1$, we end up with \eqref{asymp-pep}.

\end{document}